\documentclass[a4paper]{article}
\usepackage[left=2cm,right=2cm, top=2.5cm,bottom=2.5cm,bindingoffset=0cm]{geometry}
\usepackage{tikz, titlesec}
\usetikzlibrary{calc,intersections,through,backgrounds}
\usepackage{amsmath, amssymb, amsthm}

\usepackage[all]{xy}

\usepackage{array, hyperref}

\titleformat{\section}{\large\bfseries\filcenter}{\thesection}{1em}{}
\titleformat{\subsection}{\bfseries}{\thesubsection}{1em}{}
\titleformat{\subsubsection}[runin]{\bfseries}{\thesubsubsection}{1em}{}[.]

\usetikzlibrary{arrows} 
\tikzset{
    >=stealth',
    pil/.style={
           ->,
           thick,
           shorten <=2pt,
           shorten >=2pt,}
}

\usepackage{url}
\usepackage{color}
\usepackage{mathtools}                              
\usepackage{caption}

\usepackage{tikz}
\usetikzlibrary{calc,intersections,through,backgrounds}
\usetikzlibrary{arrows,decorations.markings}

\newtheorem{thm}{Theorem}

\newtheorem{prop}[thm]{Proposition}
\newtheorem{cor}[thm]{Corollary}

\theoremstyle{defi}

\theoremstyle{rem}
\newtheorem{rem}[thm]{Remark}
\newtheorem{exa}[thm]{Example}
\numberwithin{equation}{section}
\numberwithin{thm}{section}

\newcommand{\n}{\mathfrak{n}}

\renewcommand{\b}{\mathfrak{b}}

\newcommand{\gl}{\mathfrak{gl}}

\newcommand{\sli}{\mathfrak{sl}}

\newcommand{\C}{\mathbb{C}}

\newcommand{\Z}{\mathbb{Z}}
\newcommand{\N}{\mathbb{N}}

\DeclareMathOperator{\ib}{\left(\cdot\,,\cdot\right)}
\DeclareMathOperator{\pb}{\lbrace\cdot\,,\cdot\rbrace}

\DeclareMathOperator{\GL}{GL}
\DeclareMathOperator{\SL}{SL}
\DeclareMathOperator{\tr}{Trace}
\DeclareMathOperator{\sgn}{sgn}
\DeclareMathOperator{\diag}{diag}

\newcommand{\nl}[2][]{\left\{#1#2\right\}_{nl}}

\DeclareMathOperator*{\equals}{=}


\providecommand{\keywords}[1]{\textbf{Key words:} #1}
\providecommand{\subject}[1]{\textbf{2010 Mathematics Subject Classification:} #1}




\newpage

\title{A non-local Poisson bracket for Coxeter--Toda lattices}
\author{Vizarreta E.D.C\thanks{
Departmento of Matematica, Universidade Federal de Pernambuco, e-mail: \tt{edanchv@gmail.com}}}
\date{}

\begin{document}
\newpage 

\maketitle
\abstract{We present a non-local Poisson bracket defined on the phase space $G^{u,v}/H$ of a Coxeter--Toda lattice, where $G^{u,v}$ is a Coxeter double Bruhat cell of $\GL_n$ and $H$ is the subgroup of diagonal matrices. This non-local Poisson bracket is given in an appropriate set of coordinates of $G^{u,v}/H$ derived from the so-called {\it factorization parameters}. We prove that the generalized B\"acklund--Darboux transformations $\sigma_{u,v}^{u',v'}: G^{u,v}/H\to G^{u',v'}/H$ are Poisson maps. We exploit that fact to show that the non-local Poisson bracket corresponds to the Atiyah--Hitchin bracket under the Moser map. 
}
\vspace{0.2cm}

\keywords{Coxeter--Toda lattice; compatible Poisson brackets; Atiyah--Hitchin bracket; Moser map; generalized B\"acklaund--Darboux transformations}

\subject{37K10; 53D17}



\section{Introduction} \label{intro}

One of the main features shared by most finite dimensional completely integrable Hamiltonian systems is the existence of a bi-Hamiltonian structure, i.e. the dynamical system admits two Hamiltonian formulations with respect to a pair of compatible Poisson brackets defined on the underlying phase space.


Two of the most renowned and well documented examples of finite dimensional completely integrable Hamiltonian systems are the Toda lattice and the relativistic Toda lattice. The former was discovered in \cite{To} as a special case of the famous Fermi--Pasta--Ulam lattice. The latter was introduced in \cite{Ruis}. The bi-Hamiltonian nature of the Toda and the relativistic Toda lattice is one of the most classical results. For instance, Suris \cite{Su} has presented both lattices as restrictions of the same Hamiltonian system to different symplectic leaves.

Moser, in his celebrated paper \cite{Mo}, made a fundamental contribution to the study of solutions of the non-periodic Toda lattice. To linearize the Toda lattice, he defined a map from the space of finite Jacobi matrices to the space of proper rational functions of fixed degree. The map, nowadays called the {\it Moser map}, associates to each Jacobi matrix its respective {\it Weyl function}, which is a matrix entry of the resolvent of the Jacobi matrix. The inverse problem, i.e. the reconstruction of the Jacobi matrix from its Weyl function, was a fundamental key on Moser's work to solve the non-periodic Toda lattice. 

The Toda and the relativistic Toda lattice were described in \cite{FG1} as particular lattices of a class of integrable lattices derived from the full Kostant--Toda flows on Hessenberg matrices, called {\it elementary Toda lattices}. Each of these systems can be linearized using the Moser map.  The inverse problem for elementary Toda lattices was solved in \cite{FG3}.

In the study of magnetic monopoles a natural Poisson structure, called the {\it Atiyah--Hitchin bracket}, on the space of rational functions of fixed degree has been introduced in \cite{A-H}. Later, Faybusovich and Gekhtman \cite{FG2} noticed that the Atiyah--Hitchin bracket fits into a finite family of compatible Poisson brackets on the space of rational functions. The latter family along with the Moser map established the multi-Hamiltonian structure of the elementary Toda lattices.

Elementary Toda lattices belong to a broader family of integrable lattices, the so-called {\it Coxeter--Toda lattices}.  Toda flows on $\GL_n$ are commuting Hamiltonian flows generated by conjugation-invariant functions on $\GL_n$ with respect to the standard Poisson--Lie structure. Toda flows for an arbitrary standard semisimple Poisson--Lie group were studied in \cite{Resh}. A Coxeter--Toda lattice is an induced flow on $G^{u,v}/H$ from the restriction of a Toda flow to $G^{u,v}$, where $G^{u,v}$ is an instance of a particular class of double Bruhat cells of $\GL_n$, named as {\it Coxeter double Bruhat cell}, and $H$ is the subgroup of diagonal matrices. The term Coxeter--Toda lattice for an arbitrary simple Lie group was coined in \cite{HKKR}. Double Bruhat cells for any semisimple Lie group along with the concepts of {\it factorization parameters} and {\it twisted generalized minors} showed up in \cite{Fo-Ze} in the context of total positivity. Their connections with  integrable systems were explained in \cite{KoZ}.

In \cite{GSV4}, Coxeter--Toda lattices and B\"acklund--Darboux transformations $\sigma_{u,v}^{u',v'}:G^{u,v}/H\to G^{u',v'}/H$ were described from the cluster algebra and the annular weighted networks perspective using a special set of coordinates derived from the factorization parameters of a Coxeter double Bruhat cell $G^{u,v}$. These coordinates can be restored from the Weyl function of generic elements of a Coxeter double Bruhat cell. Thus the Moser map for Coxeter--Toda lattices is invertible.  Once again, the Faybusovich--Gekhtman family can be used to establish the multi-Hamiltonian nature of the Coxeter--Toda lattices. 

Although the multi-Hamiltonian nature of Coxeter--Toda lattices is well known, apart from the quadratic Poisson bracket \eqref{eq: quadratic epsilon} computed in \cite{GSV4}, no other explicit Poisson bracket and compatible with \eqref{eq: quadratic epsilon}, in the set of coordinates of $G^{u,v}/H$ mentioned above, did appear in the literature. The non-local Poisson bracket $\pb_{nl}$ on $G^{u,v}/H$ presented in \eqref{eq: rat 1}--\eqref{eq: rat 4} addresses this issue (see Corollary \eqref{cor: 2}). To this end, besides showing that the non-local bracket is Poisson, we will need that the generalized B\"acklund--Darboux transformations preserve the non-local Poisson bracket. 

We deduced the non-local Poisson bracket $\pb_{nl}$ on $G^{u,v}/H$ after performed Maple computations of the Moser map $m_{u,v}:G^{u,v}/H\to (\mathcal{W}_n,\pb_0)$ for small values of $n$, where $\mathcal{W}_n$ is a subset of the space of rational functions of fixed degree and $\pb_0$ is the Atiyah--Hitchin bracket. Therefore, as is expected, we show that the non-local Poisson bracket corresponds to the Atiyah--Hitchin under the Moser map, Corollary \ref{cor: 1}

This manuscript is organized as follows. Section 2 is devoted to a review of the concepts of factorization parameters for double Bruhat cells $G^{u,v}$, the standard Poisson--Lie structure in the $\GL_n$ case and the family of Faybusovich--Gekhtman Poisson brackets. Section 3 treats the Coxeter--Toda lattices and the generalized B\"acklund--Darboux transformations among them. Our main results are in Section 4. First, we present a non-local bracket $\pb_{nl}$ on $G^{u,v}/H$, in coordinates derived from the factorization parameters, and show that it is a Poisson bracket, Theorem \ref{thm: 1}. When $u^{-1}=v=s_{n-1}\cdots s_1$, after a change of variables, the non-local Poisson bracket becomes the well known linear Poisson bracket \eqref{eq: linear Toda bracket} used for a Hamiltonian formulation of the Toda lattice. Therefore, by Proposition \ref{prop: 1}, when $u^{-1}=v=s_{n-1}\cdots s_1$, the non-local bracket $\pb_{nl}$ corresponds to the Atiyah--Hitchin bracket under the Moser map. We then present Theorem \ref{prop: non local poisson maps}, the proof is sketched in Appendix \ref{Appendix A}, to draw our conclusions of interest mentioned above. The proofs are matter of direct, tough lengthy, computation. We finish with two questions that we would like to address in the future




\section{Preliminaries}

Let us first recall the factorization parameters of a double Bruhat cell, the standard Poisson--Lie bracket and the Faybusovich--Gekhtman family which includes the Atiyah--Hitchin bracket.

\subsection{Double Bruhat cells and factorization parameters}

Let $\b_+,\n_+,\b_-$ and $\n_-$ be the algebras of upper triangular, strictly upper triangular, lower triangular and strictly lower triangular matrices, respectively. The connected subgroups that correspond to $\b_+,\n_+,\b_-$ and $\n_-$ will be denoted by $B_+,B_-,N_+$ and $N_-$, respectively. We denote by $H$ the subgroup of diagonal matrices in $\GL_n$.

Since every $\xi\in\gl_n$ has a unique decomposition into
\[
\xi=\xi_-+\xi_0+\xi_+,\quad \xi_{\pm}\in\n_{\pm},\,\xi_0\text{ diagonal},
\]
then every $X$ in an open Zariski dense subset of $\GL_n$ has a unique Gauss decomposition:
\[
X=X_-X_0X_+,\quad X_{\pm}\in N_{\pm},\,X_0\in H.
\]
Let $s_i,\,i\in[1,n-1]$, denote the elementary transposition $(i,i+1)$ in the symmetric group $S_n$. A {\it reduced decomposition}, not in a unique way, of an element $w\in S_n$ is a representation of $w$ as a product $w=s_{i_1}\cdots s_{i_l}$ of the smallest possible length. The number $l$ depends only on $w$, it will be denoted by $l(w)$, and is called the {\it length} of $w$. The sequence of indices $(i_1,\ldots,i_l)$ that corresponds to a given reduced decomposition of $w$ is called a {\it reduced word} for $w$. The notion of a reduced word for an ordered pair $(u,v)$ of elements in $S_n$ is defined as follows: if $(i_1,\ldots,i_{l(u)})$ is a reduced word for $u$ and $(i'_1,\ldots,i'_{l(v)})$ is a reduced word for $v$, then any shuffle of sequences $(i_1,\ldots,i_{l(u)})$ and $(-i'_1,\ldots,-i'_{l(v)})$ is called a reduced word for $(u,v)$. 

Hereafter, $v\in S_n$ means $v=(\delta_{iv(j)})_{i,j=1}^n$ whenever there is no confusion. The {\it Bruhat decompositions} of $\GL_n$ with respect to $B_+$ and $B_-$ are defined, respectively, by 
\[
\GL_n=\bigcup_{u\in S_n}B_+uB_+,\qquad \GL_n=\bigcup_{v\in S_n}B_-vB_-.
\]

The sets $B_+uB_+$ (respectively, $B_-vB_-$) are called {\it Bruhat cells} (respectively, {\it opposite Bruhat cells}). For any $u,\,v\in S_n$, the {\it double Bruhat cell} $G^{u,v}\subset\GL_n$, first introduced in \cite{Fo-Ze}, is defined as the intersection of a Bruhat cell and an opposite Bruhat cell, i.e.
\[
G^{u,v}=B_+uB_+\cap B_-vB_-.
\]
Note that any double Bruhat cell $G^{u,v}$ is invariant under left and right multiplication by elements of $H$.

\begin{exa}\label{exa: Coxeter DBC} 
{\em
Let $u^{-1}=v=s_{n-1}\cdots s_1$. Then
\[
G^{u,v}=\lbrace (x_{ij})\in\GL_n\mid x_{ij}=0\,\text{ if }\vert i-j\vert>1\text{ and }\prod_{i=1}^{n-1}x_{i,i+1}x_{i+1,i}\neq 0\rbrace.
\]
}
\end{exa}

It follows from \cite[Theorem 1.1]{Fo-Ze} that the variety $G^{u,v}$ is biregularly isomorphic to a Zariski open subset of $\C^{l(u)+l(v)+n}$. Different birational maps $X_{{\bf j}}$ from $\C^{l(u)+l(v)+n}$ to $G^{u,v}$ can be constructed in a explicit way. Namely, let ${\bf j}=(j_1,\ldots,j_{l(u)+l(v)+n})$ be a shuffle of a reduced word ${\bf i}$ for $(u,v)$ and any re-arrangement of the sequence ${\bf k}=\lbrace \sqrt{-1},2\sqrt{-1}\ldots,n\sqrt{-1}\rbrace$; and set 
\[
\theta(j_l)=
\begin{cases}
+ &\text{ if }j_l>0,\\
- &\text{ if }j_l<0,\\
0 &\text{ if }j_l\in{\bf k}.
\end{cases}
\]

For $t\in\C,\,i,j\in[1,n-1]$ and $k\in[1,n]$, let
\[
E_i^-(t)=I_n+te_{i+1,i},\quad E_j^+(t)=I_n+te_{j,j+1},\quad\text{and}\quad E_k^0(t)=I_n+(t-1)e_{k,k}
\]
where $e_{i,j}$ denotes the elementary matrix $(\delta_{i\alpha}\delta_{j\beta})_{\alpha,\beta=1}^n$. 
Then the map $X_{{\bf j}}:\C^{l(u)+l(v)+n}\to G^{u,v}$ can be defined by
\begin{equation}\label{eq: birational map} 
X_{{\bf j}}({\bf t})=\prod_{q=1}^{l(u)+l(v)+n}E_{\vert j_q\vert}^{\theta(j_q)}(t_q).
\end{equation}
The parameters $t_1,\ldots,t_{l(u)+l(v)+n}$ constituting ${\bf t}$ are called {\it factorization parameters}. 
\begin{rem}
{\em
The bidiagonal factorization \eqref{eq: birational map} has an interpretation in terms of directed graphs embedded in a disc with weighted edges. These are called {\it perfect networks} in a disc (see \cite{GSV1}).
}
\end{rem}

\subsection{The standard Poisson--Lie bracket}

Recall now that a Poisson--Lie group is a Lie group $G$ equipped with a Poisson bracket such that the multiplication map $G\times G\to G$ is Poisson.

The {\it standard Poisson--Lie bracket} on $\SL_n$, denoted by $\pb_{\SL_n}$, is defined as follows
\begin{equation*}
\lbrace f_1,f_2\rbrace_{\SL_n}(X)=\frac{1}{2}(R(\nabla f_1(X)X),\nabla f_2(X)X)-\frac{1}{2}(R(X\nabla f_1(X)),X\nabla f_2(X))
\end{equation*}
where $\ib$ denotes the trace form of $\sli_n$, $\nabla$ is the gradient defined with respect to the trace form, and $R:\sli_n\to\sli_n$ is the standard $R$-matrix
\[
R(\xi)=(\sgn(j-i)\xi_{ij})_{i,j=1}^n
\]
On coordinate functions $x_{ij},\,x_{kl}$, the bracket $\pb_{\SL_n}$ verify the following relation
\begin{equation}\label{eq: standard P-L SL_n} 
\lbrace x_{ij},x_{kl}\rbrace_{\SL_n}=\frac{1}{2}(\sgn(k-i)+\sgn(l-j))x_{il}x_{kj}.
\end{equation}

The standard Poisson--Lie structure on $\GL_n$, denoted by $\pb_{\GL_n}$, is the Poisson bracket \eqref{eq: standard P-L SL_n} on $\GL_n$ considering the determinant function on $\GL_n$ as a Casimir function.

\begin{rem}
{\em
The notion of standard Poisson--Lie bracket is defined for any semisimple Lie group $G$. Kogan and Zelevinsky \cite[Theorem 2.3]{KoZ} described the symplectic leaves of such Poisson structure as translations by elements of the Cartan subgroup of a particular symplectic leaf inside of a double Bruhat cell.
}
\end{rem}

In particular, if $\GL_n$ is equipped with $\pb_{\GL_n}$ then every double Bruhat cell $G^{u,v}\subset\GL_n$ is a regular Poisson submanifold. Symplectic leaves in $\left(\GL_n,\pb_{\GL_n}\right)$ are of the form $S^{u,v}\cdot a$, where $S^{u,v}\subset G^{u,v}$ is a distinguished symplectic leaf and $a$ is an element of $H$. Furthermore, the dimension of the symplectic leaves in $G^{u,v}$ are equal to $l(u)+l(v)+\text{corank}(uv^{-1}-I_n)$. 

\begin{rem} 
{\em
In \cite{GSV1}, the standard Poisson--Lie bracket on $\GL_n$ and the network representation in a disc of \eqref{eq: birational map} led  to the notion of {\it standard Poisson bracket} on the space of edge weights and on the space of face weights of a network in a disc. Later, this notion of standard Poisson bracket was extended for networks embedded in an annulus (see \cite{GSV2}).
} 
\end{rem}

\subsection{The Faybusovich--Gekhtman Poisson brackets}

Consider a space of rational functions
\[
\text{Rat}_n=\left\{s(\lambda)=\frac{q(\lambda)}{p(\lambda)}\mid p(\lambda)\text{ is a monic polynomial},\,\deg p=n,\,\deg q<n \right\}.
\]
For fixed $p(\lambda),q(\lambda)$ and $k=0,\ldots,n-1$, let us denote 
\begin{equation*}
q^{[k]}(\lambda)=\lambda^{k}q(\lambda)\qquad(\text{mod}\,p(\lambda))
\end{equation*}
and define a skew-symmetric bracket $\pb_{k}$ on the coefficients of $p(\lambda),q(\lambda)$ by setting
\begin{equation}\label{eq: multibrackets} 
\begin{aligned}
\left\{p(\lambda),p(\mu)\right\}_k&=\left\{q(\lambda),q(\mu)\right\}_{k}=0,\\
\left\{p(\lambda),q(\mu)\right\}_k&=\frac{p(\lambda)q^{[k]}(\mu)-p(\mu)q^{[k]}(\lambda)}{\lambda-\mu}.
\end{aligned}
\end{equation}

The bracket $\pb_0$ of the family \eqref{eq: multibrackets} is known as the Atiyah--Hitchin bracket \cite{A-H}.

\begin{prop}[{\cite[Proposition 2]{FG2}}]\label{prop: compatible} 
The brackets $\pb_k\;(k=0,\ldots,n-1)$ are compatible Poisson brackets on $\text{Rat}_n$.
\end{prop}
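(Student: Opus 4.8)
The plan is to split the assertion into two parts, that each $\pb_k$ is a Poisson bracket and that the family is pairwise compatible, and to reduce both to one generating-function identity. Skew-symmetry holds by fiat, and the Leibniz rule is automatic because $\pb_k$ is \emph{defined} on the coordinate functions --- the coefficients of $p$ and of $q$ --- and extended as a biderivation; so for each $k$ the only real point is the Jacobi identity, and compatibility is the vanishing of the Schouten brackets $[\pb_j,\pb_k]$ for $0\le j,k\le n-1$. Since $[\pb_j,\pb_k]$ is symmetric and bilinear in $(j,k)$ and the Jacobiator of a biderivation is itself a triderivation, it suffices, by polarization, to show that every linear combination $\sum_k c_k\pb_k$ satisfies the Jacobi identity, and for that it is enough to evaluate the Jacobiator on the generating functions $p(\lambda),p(\mu),p(\nu)$, $q(\lambda),q(\mu),q(\nu)$ and on the mixed triples.

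For the Jacobi identity, note first that by \eqref{eq: pb p-p q-q} the Jacobiator vanishes on any triple built from three $p$'s or three $q$'s, while on a triple such as $\bigl(p(\lambda),p(\mu),q(\nu)\bigr)$ it reduces, after using $\{p(\lambda),p(\mu)\}_k=0$, to
\[
\bigl\{\,\{p(\mu),q(\nu)\}_k,\;p(\lambda)\,\bigr\}_k-\bigl\{\,\{p(\lambda),q(\nu)\}_k,\;p(\mu)\,\bigr\}_k ,
\]
and analogously for $\bigl(p(\lambda),q(\mu),q(\nu)\bigr)$ with $p$ and $q$ interchanged. The genuine subtlety is that $\{p(\lambda),q(\mu)\}_k$ depends, through $q^{[k]}$, on the coefficients of \emph{both} $p$ and $q$: from the division with remainder $\lambda^k q(\lambda)=c_k(\lambda)\,p(\lambda)+q^{[k]}(\lambda)$ (with $\deg c_k\le k-1$, $\deg q^{[k]}<n$) one gets
\[
\{p(\lambda),q(\mu)\}_k=\frac{p(\lambda)\,\mu^k q(\mu)-p(\mu)\,\lambda^k q(\lambda)}{\lambda-\mu}-p(\lambda)\,p(\mu)\,\frac{c_k(\mu)-c_k(\lambda)}{\lambda-\mu},
\]
so $\pb_k$ differs from the ``naive'' bracket $\tfrac{1}{\lambda-\mu}\bigl(p(\lambda)\mu^kq(\mu)-p(\mu)\lambda^kq(\lambda)\bigr)$ only by a multiple of $p(\lambda)p(\mu)$. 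Substituting this into the two surviving Jacobiators, using that $\tfrac{c_k(\mu)-c_k(\lambda)}{\lambda-\mu}$ and $\tfrac{q^{[k]}(\mu)-q^{[k]}(\lambda)}{\lambda-\mu}$ are polynomials (divided differences), and differentiating the remainder map $q\mapsto q^{[k]}$ in the coefficients of $p$ and of $q$, one is left with a rational identity in $\lambda,\mu,\nu$ that is verified by clearing the denominator $(\lambda-\mu)(\mu-\nu)(\nu-\lambda)$ and reducing the numerator modulo $p$. An equivalent and cleaner route is to argue in the $s(\lambda)$-picture \eqref{eq: multibrackets}, where $\pb_0$ is the Atiyah--Hitchin bracket and $\pb_k$ is obtained from $\pb_0$ by inserting the truncated multiplication $s\mapsto(\lambda^ks)_-$ into one slot, and to do the residue computation there.

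I would package the compatibility by means of a recursion operator. Let $\mathcal N$ be the $(1,1)$-tensor on $\text{Rat}_n$ given by ``multiply $s$ by $\lambda$ and drop the polynomial part'', $s\mapsto(\lambda s)_-$; then $\mathcal N^k s=(\lambda^k s)_-$, and by \eqref{eq: multibrackets} the bracket $\pb_k$ is precisely $\pb_0$ twisted by $\mathcal N^k$. One checks that the Nijenhuis torsion of $\mathcal N$ vanishes and that $(\pb_0,\mathcal N)$ is a Poisson--Nijenhuis structure; the standard bi-Hamiltonian machinery then delivers at once that $\pb_0,\dots,\pb_{n-1}$ are pairwise compatible Poisson brackets. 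As an alternative one may transport the whole statement along the Moser map: on the space of Hessenberg matrices that maps onto $\text{Rat}_n$ the $\pb_k$ pull back to the standard linear family of multi-Hamiltonian Toda brackets, whose mutual compatibility is the classical Lenard--Magri chain, and the Moser map is a Poisson diffeomorphism for each member of the family.

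The hard part is the single rational identity underlying both the Jacobi identity for $\pb_0$ and the vanishing of the Nijenhuis torsion of $\mathcal N$: because the remainder $q^{[k]}$ is nonlinear in the coefficients of $p$, this is not a one-line check. Once it is in place, the reductions above reduce the rest of the proof to bookkeeping.
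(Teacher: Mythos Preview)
The paper does not supply a proof of this proposition; it is quoted from \cite{FG2} and used as input. There is therefore nothing in the paper to compare your argument against.

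Your write-up is a coherent strategy but not a proof. You correctly isolate what has to be checked --- the Jacobiator on the mixed triples $(p,p,q)$ and $(p,q,q)$, or equivalently the Nijenhuis torsion of the shift operator $\mathcal{N}\colon s\mapsto(\lambda s)_-$ together with its compatibility with $\pb_0$ --- and you describe how the rest would follow once that is in hand. But the actual verification is absent: you say the hard part ``is not a one-line check'' and then do not carry it out. Neither the three-variable rational identity you allude to nor the torsion computation appears anywhere, and the alternative route you propose via the Moser map presupposes that this map is Poisson for each $\pb_k$, which is precisely part of the content established in \cite{FG2} rather than something you may assume. As written, this is an outline whose central step is deferred; to turn it into a proof you would have to either perform the generating-function Jacobiator computation explicitly, or verify in detail that $(\pb_0,\mathcal{N})$ is a Poisson--Nijenhuis pair.
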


Consider now
\[
\text{Rat}'_n=\left\{\frac{q(\lambda)}{p(\lambda)}\in\text{Rat}_n\left.\right| q(\lambda) \text{ is monic }\right\}
\]
a subset of $\text{Rat}_n$. The Poisson brackets $\pb_k$ $(k=0,\ldots,n-1)$ given by \eqref{eq: multibrackets} can be restricted to $\text{Rat}'_{n}$ and take the form 
\begin{equation}\label{eq: reduced F-G} 
\begin{aligned}
\lbrace p(\lambda),p(\mu)\rbrace_k&=\lbrace q(\lambda),q(\mu)\rbrace_k=0,\\
\lbrace p(\lambda),q(\mu)\rbrace_k&=\frac{p(\lambda)q^{[k]}(\mu)-p(\mu)q^{[k]}(\lambda)}{\lambda-\mu}-q^{[k]}(\lambda)q(\mu).
\end{aligned}
\end{equation}

\begin{exa}
{\em 
Let $n=2$. In this case we have $p(\lambda)=\lambda^2+\alpha_1\lambda+\alpha_0$, $q(\lambda)=\lambda+\beta_0$
\begin{align*}
\lbrace p(x),q(y)\rbrace_0&=\frac{p(x)q(y)-p(y)q(x)}{x-y}-q(x)q(y)=\alpha_1\beta_0-\alpha_0-\beta_0^2.
\end{align*}
Therefore, we obtain
\begin{align}\label{eq: AH n=2} 
\lbrace \alpha_0,\beta_0\rbrace_0=\alpha_1\beta_0-\alpha_0-\beta_0^2, \quad \lbrace \alpha_1,\beta_0\rbrace_0=\lbrace \alpha_1,\alpha_0\rbrace_0=0.
\end{align}
}
\end{exa}

\section{Coxeter--Toda lattices on $\GL_n$}

The main reference for this section is \cite{GSV4}.

\subsection{Toda flows}

{\it Toda flows} on $\GL_n$ are equations of motion generated by $F_k(X)=\frac{1}{k}\tr(X^k)$ $(k=1,\ldots,n-1)$ and the standard Poisson--Lie structure $\pb_{\GL_n}$. The equation of motion generated by $F_k$ has the following Lax form
\begin{equation}\label{eq: Toda flow} 
\dot X=\left[X,-\frac{1}{2}(\pi_{+}(X^k)-\pi_{-}(X^k))\right]
\end{equation}
where $\pi_+(A)$ and $\pi_-(A)$ denote strictly upper and lower parts of a matrix $A$. The functions $F_1,\ldots,F_{n-1}$ form a maximal family of algebraically independent, conjugation-invariant, functions of $\GL_n$ and they Poisson commute.

Since the action of $H$ on $\GL_n$ by conjugation is Poisson with respect to the standard Poisson--Lie structure and preserves double Bruhat cells, the standard Poisson--Lie structure induces a Poisson structure on $G^{u,v}/H$. Therefore, the Toda hierarchy induces a family of commuting Hamiltonian flows on $G^{u,v}/H$. 

\begin{exa}
{\em
Let $u$ and $v$ as in Example \ref{exa: Coxeter DBC}. Then the space $G^{u,v}/H$ is described as the set of Jac {\it Jacobi matrices} of the form
\begin{equation*}
L=\begin{pmatrix}
b_1 & 1 & 0 & \ldots & 0\\
a_1 & b_2 & 1 & \ldots &0\\
\vdots & \ddots & \ddots & \ddots & \vdots\\
0 &\ldots & a_{n-2} & b_{n-1} & 1\\
0 & \ldots & 0 & a_{n-1} & b_n
\end{pmatrix},\quad a_1\ldots a_{n-1}\neq 0,\quad\det L\neq 0.
\end{equation*}
The Lax equations \eqref{eq: Toda flow} then become the equations of the {\it finite non-periodic Toda hierarchy}
\[\dot L=[L,\pi_-(L^k)].\]
The case $k=1$ is the well-known Toda lattice
\begin{equation}\label{eq: Toda lattice} 
\begin{aligned}
\dot a_j &=a_j(b_{j+1}-b_j),\quad j=1,\ldots,n-1,\\
\dot b_j &=a_j-a_{j-1},\quad j=1,\ldots,n,
\end{aligned}
\end{equation}
with the boundary conditions $a_n=a_0=0$.
}
\end{exa}

Let us consider
\[
\mathcal{W}_n=\left\{\frac{q(\lambda)}{p(\lambda)}\in
\text{Rat}'_n\left.\right|\deg p=\deg q+1,\,p \text{ and }q\text{ are coprime, }p(0)\neq 0\right\}.
\]
a subset of $\text{Rat}'_n$. The {\it Weyl function} of $X\in\GL_n$ is a rational function
\begin{equation}\label{eq: Weyl function} 
m(\lambda)=m(\lambda;X)=\left((\lambda I-X)^{-1}e_1,e_1\right)=\frac{\Delta_{\left[2,n\right]}(\lambda)}{\Delta_{\left[1,n\right]}(\lambda)},
\end{equation}
where $\Delta_{\left[1,n\right]}(\lambda)$ is the characteristic polynomial of $X$, $\Delta_{\left[2,n\right]}(\lambda)$ is the characteristic polynomial of the $(n-1)\times(n-1)$ submatrix of $X$ formed by deleting the first row and column. Since the Weyl function is invariant under the action of $H$ on $G^{u,v}$ by conjugation, one can consider the map
\[
m_{u,v}\colon G^{u,v}/H\to\mathcal{W}_n,
\] 
called {\it Moser map}. In the tridiagonal case, $v=u^{-1}=s_{n-1}\cdots s_1$, Moser \cite{Mo} proved that the map $m_{u,v}$ is invertible and the system \eqref{eq: Toda lattice} is a completely integrable system in the sense of Arnold--Liouville. The level sets of the function $\det L$ foliate Jac into $2(n-1)$-dimensional symplectic manifolds. 

Gekthman, Shapiro and Vainshtein \cite{GSV4} proved that there are other double Bruhat cells, called {\it Coxeter double Bruhat cells}, which share common features with the tridiagonal case. 

\subsection{Coxeter double Bruhat cells on $\GL_n$}

A Coxeter element $w\in S_n$ is a product of $n-1$ distinct transpositions. Given a pair of Coxeter elements $u,\,v\in S_n$, the double Bruhat cell $G^{u,v}$ is called a {\it Coxeter double Bruhat cell} and in this case $\dim G^{u,v}=3n-2$. For any pair $(u,v)$ of Coxeter elements the integrable equations induced on $G^{u,v}/H$ by Toda flows will be called {\it Coxeter--Toda lattices}.

Note that every Coxeter element $v\in S_n$ can be written in the form
\begin{equation*}\label{eq: Coxeter} 
v=s_{[i_{k-1},i_k]}\cdots s_{[i_1,i_2]}s_{[1,i_1]}
\end{equation*}
for some subset $I=\lbrace 1=i_0<i_1<\cdots<i_k=n\rbrace\subset [1,n]$ where $s_{[p,q]}=s_ps_{p+1}\cdots s_{q-1}$ for $1\leq p<q\leq n$. For a pair $(u,v)$ of Coxeter elements let
\begin{equation}\label{eq: index sets Coxeter} 
I^{+}=\lbrace 1=i_0^{+}<\cdots<i_{k^{+}}^{+}=n\rbrace,\quad
I^{-}=\lbrace 1=i_0^-<\cdots<i_{k^-}^-=n\rbrace
\end{equation}
be the subsets of $[1,n]$ that correspond to $v$ and $u^{-1}$, respectively, previously described. 

For a set of complex parameters $c_1^{\pm},\ldots,c_{n-1}^{\pm};d_1,\ldots,d_n$, define the matrices $D=\diag(d_1,\ldots,d_n)$ and
\begin{equation}
C^{+}_j=\sum_{\alpha=i_{j-1}^{+}}^{i_j^{+}-1}c_{\alpha}^{+}e_{\alpha,\alpha+1},\, j\in[1,k^{+}],\quad
C^{-}_j=\sum_{\alpha=i_{j-1}^{-}}^{i_j^{-}-1}c_{\alpha}^{-}e_{\alpha,\alpha+1},\, j\in[1,k^{-}].
\end{equation}

Using the map \eqref{eq: birational map} one has the following result.

\begin{prop}[{\cite[Lemma 3.3]{GSV4}}]\label{prop: factorization} 
A generic element $X\in G^{u,v}$ can be written as
\begin{equation}\label{eq: Coxeter factorization} 
X=(1-C_1^-)^{-1}\cdots (1-C_{k^-}^-)^{-1}D(1-C_{k^+}^+)^{-1}\cdots (1-C_1^+)^{-1}.
\end{equation}
\end{prop}

Fixed a pair $(u,v)$ of Coxeter elements and hence fixed sets $I^+,I^-$ given by \eqref{eq: index sets Coxeter} set 
\begin{equation*}
\begin{aligned}
\varepsilon_i^{\pm}&=
\begin{cases}
0 & \text{ if } i=i_j \text{ for some } 0<j\leq k_{\pm}\\
1 & \text{ otherwise }
\end{cases}
\end{aligned}
\end{equation*}
and
\[
\varepsilon_i=\varepsilon_i^++\varepsilon_i^-,\qquad\varkappa_i=i+1-\sum_{\beta=1}^i\varepsilon_{\beta}.
\]

Using the Laurent expansion of the Weyl function \eqref{eq: Weyl function} at infinity, one has
\[
m(\lambda)=\sum_{j=0}^{\infty}\frac{h_j(X)}{\lambda^{j+1}}
\]
where $h_j(X)=(X^j)_{11}=(X^je_1,e_1),\, j\in\Z$. For any $l\in\Z,\,i\in\N$ consider
\[
\mathcal{H}_i^{(l)}=(h_{\alpha+\beta+l-i-1})_{\alpha,\beta=1}^i\quad\text{and}\quad\Delta_i^{(l)}=\det\mathcal{H}_i^{(l)},
\]
where $\Delta_0^l=1$ for any $l\in\Z$. 

\begin{thm}[{\cite[Theorem 4.1]{GSV4}}]
If $X$ possesses factorization \eqref{eq: Coxeter factorization}, then
\begin{equation*}\label{eq: c_i and d_i} 
\begin{aligned}
d_i&=\frac{\Delta_i^{(\varkappa_i+1)}\Delta_{i-1}^{(\varkappa_{i-1})}}{\Delta_i^{(\varkappa_i)}\Delta_{i-1}^{(\varkappa_{i-1}+1)}},\\
c_i^+c_i^-&=\frac{\Delta_{i-1}^{(\varkappa_{i-1})}\Delta_{i+1}^{(\varkappa_{i+1})}}{\left(\Delta_i^{(\varkappa_i+1)}\right)^2}\left(\frac{\Delta_{i+1}^{(\varkappa_{i+1}+1)}}{\Delta_{i+1}^{(\varkappa_{i+1})}}\right)^{\varepsilon_{i+1}}
\left(\frac{\Delta_{i-1}^{\varkappa_{i+1}+1}}{\Delta_{i-1}^{(\varkappa_{i-1})}}\right)^{2-\varepsilon_i}
\end{aligned}
\end{equation*}
for any $i\in[1,n]$.
\end{thm}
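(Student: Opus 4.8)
This theorem is an explicit inversion statement: it reconstructs the diagonal entries $d_i$ and the products $c_i^{+}c_i^{-}$ — that is, $2n-1$ invariants, consistently with $\dim G^{u,v}/H=2n-1=\dim\mathcal{W}_n$ — from the moment sequence $h_j(X)=(X^je_1,e_1)$, equivalently from the Weyl function $m(\lambda;X)=q(\lambda)/p(\lambda)$. The plan is threefold: (1) compute $m(\lambda;X)$ directly from the factorization \eqref{eq: Coxeter factorization} and bring it to the form of a finite continued fraction whose floors carry coefficients that are monomials in the $d_i$ and the $c_i^{+}c_i^{-}$; (2) invoke the classical determinantal formulas expressing the floor coefficients of such a continued fraction as ratios of Hankel determinants of its (bilateral) moment sequence; (3) equate the two descriptions and solve the resulting triangular system, first for the $d_i$ and then, dividing out the $d$'s, for the $c_i^{+}c_i^{-}$.

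For step (1) I would use the weighted planar network attached to a bidiagonal factorization, recalled in Section 2: the factor $D$ contributes level edges of weight $d_i$, the blocks $C_j^{+}$ and $C_j^{-}$ contribute rising and falling edges carrying $c_\alpha^{+}$ and $c_\alpha^{-}$, and $(X^j)_{11}$ is the generating function of closed walks of combinatorial length $j$ based at the top vertex. Decomposing such a walk into first-return excursions and summing the geometric series level by level produces a continued fraction
\[
m(\lambda;X)=\cfrac{\gamma_1}{\lambda-\beta_1-\cfrac{\gamma_2}{\lambda-\beta_2-\cfrac{\gamma_3}{\ddots}}},
\]
with $\beta_i$ proportional to $d_i$ and $\gamma_i$ to $d_{i-1}d_i\,c_{i-1}^{+}c_{i-1}^{-}$; only the products $c_\alpha^{+}c_\alpha^{-}$ appear, since any excursion dropping below level $i$ must use exactly one rising and one falling $c$-edge there, and $m$ is invariant under conjugation by $H$. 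The feature absent from the tridiagonal case is that when an index $i$ is an interior point of $I^{+}$ or $I^{-}$ — i.e.\ $i=i_j^{\pm}$ for an inner $j$ — two consecutive elementary factors merge into a single level of the network, so the depth of the continued fraction advances by less than one per elementary factor; the integers $\varkappa_i$ record this effective depth, and $\varepsilon_i$ counts how many of the two blocks $C^{\pm}$ are still ``open'' at level $i$. This is also the source of the negative moments entering $\Delta_i^{(l)}$: the matching between floors and Hankel matrices is shifted accordingly.

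For step (2) I would invoke the classical fact that if $m(\lambda)=\sum_{j}h_j\lambda^{-j-1}$ is represented by a $J$-fraction, then its floor-$i$ partial denominator and numerator are
\[
\frac{\Delta_i^{(l+1)}\,\Delta_{i-1}^{(l)}}{\Delta_i^{(l)}\,\Delta_{i-1}^{(l+1)}},\qquad
\frac{\Delta_{i-1}^{(l)}\,\Delta_{i+1}^{(l)}}{\bigl(\Delta_i^{(l)}\bigr)^{2}},
\]
obtained from the $LDU$ (Cholesky-type) decomposition of the shifted Hankel matrix $\mathcal{H}_i^{(l)}$ together with Jacobi's determinant identity, with $l=\varkappa_i$ at level $i$ according to step (1). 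Comparing these with the monomials of step (1) gives immediately the first line of \eqref{eq: c_i and d_i} for $d_i$, and then, after substituting the already-known $d$'s, the second line for $c_i^{+}c_i^{-}$; the factors with exponents $\varepsilon_{i+1}$ and $2-\varepsilon_i$ in that second line are exactly the single-shift corrections that occur when a floor sits at an interior point of $I^{\pm}$, where the Hankel shift increases by more than one.

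The hard part will be step (1): showing that the network of an \emph{arbitrary} Coxeter pair $(u,v)$ still yields a continued fraction of the stated shape, and establishing the precise dictionary between its floors and the combinatorial data $(\varkappa_i,\varepsilon_i)$ read off from $I^{+}$ and $I^{-}$, including the correct bookkeeping of the shifts that forces the appearance of negative moments. Granting that dictionary, step (2) is a known identity and step (3) is routine.
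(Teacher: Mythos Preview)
This theorem is not proved in the present paper: it is stated with the attribution \cite{GSV4} and quoted as background, with no argument given beyond the immediate corollary that the Moser map is invertible. Consequently there is no ``paper's own proof'' to compare your proposal against; the proof lives entirely in the cited reference of Gekhtman--Shapiro--Vainshtein.

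That said, your outline is broadly on the right track and is close in spirit to how \cite{GSV4} actually proceeds: the weighted-network interpretation of the factorization \eqref{eq: Coxeter factorization}, the identification of $(X^je_1,e_1)$ with weighted path counts, and the reduction to Hankel-determinant identities are all ingredients used there. One caution: your step~(1) presents the Weyl function as a single $J$-type continued fraction with uniform floors, but for a general Coxeter pair the structure is not a pure $J$-fraction; depending on whether $\varepsilon_i=0,1,2$ the $i$th floor behaves like a $T$-fraction floor, a mixed floor, or a $J$-fraction floor, and it is precisely this trichotomy that produces the exponents $\varepsilon_{i+1}$ and $2-\varepsilon_i$ and the shifted superscripts $\varkappa_i$ in \eqref{eq: c_i and d_i}. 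Your sentence ``the depth of the continued fraction advances by less than one per elementary factor'' understates this: the issue is not merely depth bookkeeping but a genuine change in the algebraic form of the recursion at each level. If you want a self-contained argument you will need to make that trichotomy explicit rather than treat it as a perturbation of the tridiagonal case; as written, your step~(2) invokes the classical $J$-fraction/Hankel dictionary with a single shift $l=\varkappa_i$, which does not by itself account for the two correction factors in the formula for $c_i^{+}c_i^{-}$.
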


\begin{cor}
For a pair $(u,v)$ of Coxeter elements in $S_n$. The map $m_{u,v}:G^{u,v}/H\to\mathcal{W}_n$ is invertible. 
\end{cor}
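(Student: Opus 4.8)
The plan is to turn the reconstruction formulas \eqref{eq: c_i and d_i} of the preceding theorem into an explicit rational inverse of $m_{u,v}$. The first point to settle is that a function $m(\lambda)=q(\lambda)/p(\lambda)\in\mathcal{W}_n$ already determines \emph{all} moments $h_j(X)=(X^je_1,e_1)$ for $j\in\Z$, not merely for $j\geq 0$: the nonnegative moments are the coefficients of the Laurent expansion of $m$ at $\lambda=\infty$, while, since $p(0)\neq 0$ forces $X\in\GL_n$, the identity $(\lambda I-X)^{-1}=-\sum_{j\geq 0}\lambda^{j}X^{-j-1}$ valid near $\lambda=0$ shows the negative moments are read off from the Taylor expansion of the rational function $m$ at $\lambda=0$. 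Consequently every Hankel determinant $\Delta_i^{(l)}$ is a rational function of the coefficients of $p$ and $q$ alone, so the right-hand sides of \eqref{eq: c_i and d_i} define explicit rational functions $\widetilde d_i$ and $\widetilde{(cc)}_i$ on $\mathcal{W}_n$.

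Next I would describe the $H$-action in the coordinates of \eqref{eq: Coxeter factorization}. Conjugating $X$ by $h=\diag(h_1,\ldots,h_n)\in H$ leaves $D$ unchanged and replaces each $C_j^\pm$ by $hC_j^\pm h^{-1}$, so it fixes every $d_i$ and acts on the off-diagonal parameters $c_\alpha^\pm$ by the characters $h\mapsto (h_\alpha/h_{\alpha+1})^{\pm1}$; in particular each product $c_\alpha^+c_\alpha^-$ is $H$-invariant. Conversely, if two generic elements of $G^{u,v}$ carry the same values of $d_1,\ldots,d_n$ and $c_1^+c_1^-,\ldots,c_{n-1}^+c_{n-1}^-$, the telescoping system $h_{\alpha+1}/h_{\alpha}=(c_\alpha^-)'/c_\alpha^-$ is solvable for $h$ (the $c_\alpha^-$ being nonzero on the relevant open set), so the two elements are $H$-conjugate. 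Hence on a Zariski-dense open subset of $G^{u,v}/H$ the functions $(d_i;\,c_i^+c_i^-)$ form a coordinate system, and a count gives $2n-1=(3n-2)-(n-1)=\dim G^{u,v}/H=\dim\mathcal{W}_n$.

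Finally I would assemble the inverse. By \eqref{eq: c_i and d_i}, the rational map $\phi:\mathcal{W}_n\dashrightarrow G^{u,v}/H$ sending $m$ to the point with coordinates $(\widetilde d_i;\,\widetilde{(cc)}_i)$ satisfies $\phi\circ m_{u,v}=\mathrm{id}$ on a dense open set of $G^{u,v}/H$, since applying $m_{u,v}$ and then the formulas recovers the original invariants $(d_i;\,c_i^+c_i^-)$. Thus $m_{u,v}$ is injective on a dense open; since source and target are irreducible of the same dimension $2n-1$, injectivity forces $m_{u,v}$ to be dominant (a non-dominant morphism would have positive-dimensional generic fibers). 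Being dominant, it is right-cancellable among rational maps, so $\phi\circ m_{u,v}=\mathrm{id}$ also yields $m_{u,v}\circ\phi=\mathrm{id}$ on a dense open; hence $m_{u,v}$ and $\phi$ are mutually inverse birational isomorphisms.

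The heart of the matter — the reconstruction identities \eqref{eq: c_i and d_i} — is already furnished by the preceding theorem, so the work that remains is essentially bookkeeping: checking that the $H$-orbits are exactly the level sets of $(d_i;\,c_i^+c_i^-)$, keeping track of indices and boundary conventions (notably $\Delta_0^{(l)}=1$) so that every Hankel determinant occurring in \eqref{eq: c_i and d_i} involves only moments already determined by $m$, and isolating the common Zariski-dense domain on which the factorization \eqref{eq: Coxeter factorization}, the expansions at $0$ and $\infty$, coprimality of $p$ and $q$, and the non-vanishing of the relevant $c_\alpha^\pm$ and Hankel minors all hold, so that the constructed inverse is regular there. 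I expect this last, purely administrative, step to be the only real obstacle.
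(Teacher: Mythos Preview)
Your proposal is correct and follows precisely the reasoning the paper intends: the corollary is stated immediately after the theorem giving \eqref{eq: c_i and d_i} with no separate proof, so the paper treats invertibility as a direct consequence of those reconstruction formulas together with the fact that $(d_i,\,c_i=c_i^+c_i^-)$ serve as coordinates on an open dense set of $G^{u,v}/H$. You have simply written out in full the steps the paper leaves implicit --- in particular the recovery of negative moments from the Taylor expansion of $m$ at $\lambda=0$, the identification of $H$-orbits with level sets of $(d_i,c_i^+c_i^-)$, and the dimension count --- all of which are correct and constitute the expected justification.
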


\begin{rem}
{\em
The standard Poisson--Lie bracket on $\GL_n$ induces the Poisson bracket $\pb_1$ \eqref{eq: reduced F-G} on the Weyl functions. A proof using the weighted network approach can be found in \cite[Proposition 5.3]{GSV4}.
}
\end{rem}

Following \cite{GSV4}, one considers the parameters $c_i=c_i^+c_i^-,\,d_i$ that correspond to factorization \eqref{eq: Coxeter factorization} of a generic element in $G^{u,v}$ as coordinates on the open dense set in $G^{u,v}/H$. 

\begin{prop}[{\cite[Lemma 6.1]{GSV4}}]\label{prop: quadratic epsilon} 
The standard Poisson--Lie structure on $\GL_n$ induces the following Poisson brackets for variables $c_i,d_i$
\begin{equation}\label{eq: quadratic epsilon} 
\lbrace c_i,c_{i+1}\rbrace=(\varepsilon_{i+1}-1)c_ic_{i+1},\quad\lbrace c_i,d_i\rbrace=-c_id_i,\quad\lbrace c_i,d_{i+1}\rbrace=c_id_{i+1},
\end{equation}
and the rest of the brackets are zero.
\end{prop} 
The Hamiltonians $F_k(X)=\frac{1}{k}\tr(X^k)$, due to the invariance under conjugation by elements of $H$,  when restricted to a Coxeter double Bruhat cell $G^{u,v}$ can be expressed as functions of $c_i,d_i$. Thus, the functions $F_k$ ($k=1,\ldots,n-1$) serve as Hamiltonians for Coxeter--Toda flows on $G^{u,v}/H$. Therefore, the Toda hierarchy defines a completely integrable system on the symplectic leaves (level sets of the determinant) in $G^{u,v}/H$, i.e. Coxeter-Toda lattices are completely integrable systems.

In the case $k=1$, the Hamiltonian $F_1$ has the form
\begin{equation}\label{eq: F_1} 
F_1=\sum_{i=1}^n(d_i+c_{i-1}d_{i-1})+\sum_{i=3}^n\sum_{r=1}^{i-2}d_rc_r\prod_{j=r+1}^{i-1}c_j\varepsilon_j^-\varepsilon_j^+.
\end{equation}

Using \eqref{eq: F_1}, the Hamiltonian equations of the Coxeter--Toda flow generated by $F_1$ and \eqref{eq: quadratic epsilon} on $G^{u,v}/H$ are given by
\begin{equation}\label{eq: Coxeter-Toda flow} 
\begin{aligned}
\dot c_p=\lbrace c_p,F_1\rbrace&=c_p\left[d_{p+1}-d_p+(1-\varepsilon_p)c_{p-1}d_{p-1}-c_pd_p+\varepsilon_{p+1}c_{p+1}d_{p+1}\right]+\sum_{i=3}^n\sum_{r=1}^{i-2}\prod_{j=r+1}^{i-1}\varepsilon_j^-\varepsilon_j^+\\
&\mathrel{\phantom{=}}\times\left[(\delta_{p+1,r}-\delta_{p,r})c_pd_r\prod_{j=r}^{i-1}c_j+d_r\sum_{j=r}^{i-1}\left((\varepsilon_j-1)\delta_{j,p+1}+(1-\varepsilon_p)\delta_{j,p-1}\right)c_jc_p\prod_{l=r\atop l\neq j}^{i-1}c_l\right]\\
\dot d_p=\lbrace d_p,F_1\rbrace&=d_p\left[d_pc_p-d_{p-1}c_{p-1}\right]+\sum_{i=3}^n\sum_{r=1}^{i-2}d_r\prod_{j=r+1}^{i-1}\varepsilon_j^-\varepsilon_j^+\sum_{j=r}^{i-1}(\delta_{j,p}-\delta_{j,p-1})c_jd_p\prod_{l=r\atop l\neq j}^{i-1}c_l
\end{aligned}
\end{equation}

\begin{exa}\label{example 1} 
{\em
\begin{enumerate}
\item[(i)] Let $u^{-1}=v=s_{n-1}\cdots s_1$. Then $\varepsilon=(2,0,\ldots,0)$ and the Hamiltonian equations \eqref{eq: Coxeter-Toda flow} become 
\begin{equation}
\begin{aligned}
\dot c_i&=\lbrace c_i,F_1\rbrace=c_i(d_{i+1}-d_i+c_{i-1}d_{i-1}-c_id_i),\\
\dot d_i&=\lbrace d_i,F_1\rbrace=d_i(c_id_i-c_{i-1}d_{i-1}).
\end{aligned}
\end{equation}
Using the change of variables, $a_i=c_id_i^2,\,b_i=d_i+c_{i-1}d_{i-1}$, the system becomes the Toda lattice \eqref{eq: Toda lattice}.
\item[(ii)] Let $u=v=s_{n-1}\cdots s_1$. Then $\varepsilon=(2,1,\ldots,1,0)$ and the Hamiltonian equations \eqref{eq: Coxeter-Toda flow} become
\begin{align*}
\dot c_i=c_i(d_{i+1}-d_i+c_{i+1}d_{i+1}-c_id_i),\qquad\dot d_i=d_i(c_id_i-c_{i-1}d_{i-1}).
\end{align*}
After the change of variables $\tilde{c}_i=c_id_i$, this system becomes the {\it relativistic Toda lattice}:
\begin{equation}\label{eq: relativistic toda lattice} 
 \dot{\tilde{c}}_i=\tilde{c}_i(d_{i+1}-d_i+\tilde{c}_{i+1}-\tilde{c}_{i-1}),\qquad\dot d_i=d_i(\tilde{c}_i-\tilde{c}_{i-1}).
\end{equation}
\end{enumerate}
}
\end{exa}

It is well known that the Toda lattice \eqref{eq: Toda lattice} and the relativistic Toda lattice \eqref{eq: relativistic toda lattice} are completely integrable bi-Hamiltonian systems \cite{Su}. Moreover, one can use the family of compatible Faybusovich--Gekhtman Poisson brackets and the fact that the Moser map is an invertible map to guarantee the multi-Hamiltonian nature of Coxeter--Toda lattices. This fact was exposed in \cite{FG2} for {\it elementary Toda lattices}, which are particular Coxeter--Toda lattices $G^{u,v}/H$ by taking $v=s_{n-1}\cdots s_1$ and an arbitrary Coxeter element $u$.

\subsection{Generalized B\"acklund--Darboux transformations}

We now review the {\it generalized B\"acklund--Darboux transformation}. It is a birational automorphism between phase spaces of Coxeter--Toda lattices which preserve the corresponding Coxeter--Toda flows. The cluster algebra interpretation of these automorphisms was exposed in \cite{GSV4}. 

For the purpose of this note, we avoid the cluster algebra approach and simply we refer to \cite{GSV4} for the interested reader. However, based in \cite{GSV4}, we define these transformations in a more axiomatic way as a finite composition of some elementary transformations.

\begin{table}[h]
\begin{center}
\begin{tabular}{|c|c|c|}
\hline
$\varepsilon$ &  \text{Mutation} & $\varepsilon'$ \\
\hline
$\varepsilon_i=2,\varepsilon_{i+1}=0$ & $(0,i)$ & $\varepsilon'_i=1,\varepsilon'_{i+1}=1$\\
\hline
$\varepsilon_i=1,\varepsilon_{i+1}=0$ & $(0,i)$ & $\varepsilon'_i=0,\varepsilon'_{i+1}=1$\\
\hline
$\varepsilon_{n-1}=0$ & $(1,n-1)$ & $\varepsilon'_{n-1}=1$\\
\hline
$\varepsilon_{n-1}=1$ & $(1,n-1)$ & $\varepsilon'_{n-1}=2$\\
\hline
\end{tabular}
\end{center}
\caption{Mutations}\label{table 2} 
\end{table}

Consider two pairs $(u,v)$ and $(u',v')$ of Coxeter elements such that the entries of their corresponding $n$-tuples $\varepsilon=(\varepsilon_i)_{i=1}^n$ and $\varepsilon'=(\varepsilon'_i)_{i=1}^n$ satisfy one of the possible situations listed in Table \ref{table 2}. Here the entries of the $n$-tuples $\varepsilon$ and $\varepsilon'$  are all equal except only in the entries  specified in Table \ref{table 2}. In the first two rows $i$ is assumed to be less than $n-1$. 

We denote by $(s,i),\,s=0,1,\,i\in[1,n-1],$ as is indicated in Table \ref{table 2}, the transformation, called {\it mutation}, that takes the $n$-tuple $\varepsilon$ to the $n$-tuple $\varepsilon'$. If $(s,i)$, $s=0,1,\,i\in[1,n-1]$, denotes the mutation of  $n$-tuple $\varepsilon$ to  the $n$-tuple $\varepsilon'$ specified in Table \ref{table 2} then we denote by $(1-s,i),\,s=0,1$, the inverse mutation from the $n$-tuple $\varepsilon'$ to the $n$-tuple $\varepsilon$.

\begin{table}[h]
\begin{center}
\begin{tabular}{|c|c|c|c|}
\hline
$\varepsilon$ & $\varepsilon'$ &\text{Transformation} & \text{Inverse}\\
\hline
& & $c'_{i-1}=c_{i-1}(1+c_i)$ & $c_{i-1}=\frac{c'_{i-1}d'_{i+1}}{d'_{i+1}+c'_id'_i}$\\
$\varepsilon_i=2$ & $\varepsilon'_i=1$ & $c'_i=\frac{c_id_{i+1}}{d_i(1+c_i)^2}$ & $c_i=\frac{c'_id'_i}{d'_{i+1}}$\\
& & $c'_{i+1}=c_{i+1}(1+c_i)$ & $c_{i+1}=\frac{c'_{i+1}d'_{i+1}}{d'_{i+1}+c'_id'_i}$\\
$\varepsilon_{i+1}=0$ & $\varepsilon'_{i+1}=1$ & $d'_i=d_i(1+c_i)$ & $d_i=\frac{d'_id'_{i+1}}{d'_{i+1}+d'_ic'_i}$\\
&  & $d'_{i+1}=\frac{d_{i+1}}{1+c_i}$ & $d_{i+1}=d'_{i+1}+d'_ic'_i$\\
\hline
 & & $c'_i=\frac{c_id_{i+1}}{d_i(1+c_i)^2}$ & $c_i=\frac{c'_id'_i}{d'_{i+1}}$\\
$\varepsilon_i=1$ & $\varepsilon'_i=0$ & $c'_{i+1}=c_{i+1}(1+c_i)$ & $c_{i+1}=\frac{c'_{i+1}d'_{i+1}}{d'_{i+1}+c'_id'_i}$\\
$\varepsilon_{i+1}=0$ & $\varepsilon'_{i+1}=1$ & $d'_i=d_i(1+c_i)$ & $d_i=\frac{d'_id'_{i+1}}{d'_{i+1}+c'_id'_i}$\\
& &  $d'_{i+1}=\frac{d_{i+1}}{1+c_i}$ & $d_{i+1}=d'_{i+1}+c'_id'_i$\\
\hline
& & $c'_{n-1}=\frac{c_{n-1}d_{n-1}}{d_n}$ & $c_{n-1}=\frac{c'_{n-1}d'_n}{d'_{n-1}(1+c'_{n-1})^2}$\\
$\varepsilon_{n-1}=0$ & $\varepsilon'_{n-1}=1$ & $d'_{n-1}=\frac{d_nd_{n-1}}{d_n+c_{n-1}d_{n-1}}$ & $d_{n-1}=d'_{n-1}(1+c'_{n-1})$ \\
 & & $d'_n=d_n+c_{n-1}d_{n-1}$ & $d_n=\frac{d'_n}{1+c'_{n-1}}$\\
\hline
& & $c'_{n-2}=\frac{c_{n-2}d_n}{d_n+c_{n-1}d_{n-1}}$ & $c_{n-2}=c'_{n-2}(1+c'_{n-1})$ \\
$\varepsilon_{n-1}=1$ & $\varepsilon'_{n-1}=2$ & $c'_{n-1}=\frac{c_{n-1}d_{n-1}}{d_n}$ & $c_{n-1}=\frac{c'_{n-1}d'_n}{d'_{n-1}(1+c'_{n-1})^2}$\\
& & $d'_{n-1}=\frac{d_nd_{n-1}}{d_n+c_{n-1}d_{n-1}}$ & $d_{n-1}=d'_{n-1}(1+c'_{n-1})$\\
& & $d'_n=d_n+c_{n-1}d_{n-1}$ & $d_n=\frac{d'_n}{1+c'_{n-1}}$ \\
\hline
\end{tabular}
\end{center}
\caption{Elementary transformations}\label{table 3} 
\end{table}

There exists birational transformations between $G^{u,v}/H$ and $G^{u',v'}/H$, called {\it elementary transformations}, corresponding to the mutations of Table \ref{table 2}. These transformations are listed in Table \ref{table 3}.

Now, let us fix two arbirary pairs, $(u,v)$ and $(u',v')$ of Coxeter elements and let $\varepsilon=(\varepsilon_i)_{i=1}^n,\,\varepsilon'=(\varepsilon'_i)_{i=1}^n$ be their corresponding $n$-tuples. It was noted in \cite{GSV4} that we can transform the $n$-tuple $\varepsilon$ to the $n$-tuple $\varepsilon'$ via a finite sequence of mutations listed in Table \ref{table 2}.

\begin{exa}
{\em
Let $n=4$. The diagram below show the mutations of Table \ref{table 2} among all possible values of $\varepsilon$ corresponding to a pair of Coxeter elements $(u,v)$ of $S_4$.
\begin{displaymath}
    \xymatrix{
         & (2,0,0,0)\ar[d]^{(1,3)}  & & \\
         & (2,0,1,0)\ar[ld]_{(1,3)}\ar[dr]^{(1,2)} & &\\
         (2,0,2,0) & & (2,1,0,0)\ar[d]^{(1,3)} &\\
         & & (2,1,1,0)\ar[ld]_{(1,3)}\ar[dr]^{(1,2)} &\\
         & (2,1,2,0) & & (2,2,0,0)\ar[d]^{(1,3)}\\
         & & & (2,2,1,0)\ar[d]^{(1,3)}\\
         & & & (2,2,2,0)
         }
\end{displaymath}
}
\end{exa}

The generalized B\"acklund--Darboux transformation $\sigma_{u,v}^{u',v'}:G^{u,v}/H\to G^{u',v'}/H$ is the corresponding finite composition of the elementary transformations, listed in Table \ref{table 3}, associated to the finite sequence of mutations that takes the $n$-tuple $\varepsilon$ into the $n$-tuple $\varepsilon'$.


\section{Main results}

Despite the fact that Coxeter--Toda lattices are completely integrable multi-Hamiltonian systems, no other Poisson bracket in $c_i,d_i$ coordinates and compatible with \eqref{eq: quadratic epsilon} is known in the literature. 

\subsection{A non-local Poisson bracket}


We consider the following non-local bracket $\pb_{nl}$

\begin{align}
\lbrace d_i,d_{i+k}\rbrace_{nl}&=
\begin{cases}
c_id_i & \text{ if } k=1,\\
\displaystyle{d_i\prod_{j=i}^{i+k-1}c_j} & \text{ if } k>1,\,\epsilon_{i+1}=\ldots=\epsilon_{i+k-1}=0.
\end{cases}\label{eq: rat 1} 
\\
\lbrace c_i,d_{i+k}\rbrace_{nl}&=
\begin{cases}
c_i & \text{ if } k=0,\\
-c_i(c_i+1) & \text{ if } k=1,\\
-\displaystyle{(c_i+1)\prod_{j=i}^{i+k-1}c_j} & \text{ if } k>1,\,\epsilon_{i+1}=\ldots=\epsilon_{i+k-1}=0,\\
\displaystyle{\prod_{j=i}^{i+k-1}c_j} & \text{ if } k>1,\,\epsilon_{i+1}=2,\epsilon_{i+2}=\ldots=\epsilon_{i+k-1}=0.
\end{cases}\label{eq: rat 2} 
\\
\lbrace c_{i+k},d_i\rbrace_{nl}=&
\begin{cases}
\displaystyle{(2-\epsilon_{i+1})c_ic_{i+1}\frac{d_i}{d_{i+1}}} & \text{ if } k=1,\\
\displaystyle{(2-\epsilon_{i+k})\frac{d_i}{d_{i+k}}\prod_{j=i}^{i+k}c_j} & \text{ if } k>1,\,\epsilon_{i+1}=\ldots=\epsilon_{i+k-1}=0.
\end{cases}\label{eq: rat 3} 
\end{align}
\begin{align}
\lbrace c_i,c_{i+k}\rbrace_{nl}=&
\begin{cases}
\displaystyle{(2-\epsilon_{i+1})\frac{c_i+1}{d_{i+1}}c_ic_{i+1}} & \text{ if } k=1,\\
\displaystyle{(2-\epsilon_{i+k})\frac{c_i+1}{d_{i+k}}\prod_{j=i}^{i+k}}c_j & \text{ if } k>1,\,\epsilon_{i+1}=\ldots=\epsilon_{i+k-1}=0,\\
\displaystyle{(\epsilon_{i+k}-2)\frac{\prod_{j=i}^{i+k}c_j}{d_{i+k}}} & \text{ if } k>1,\,\epsilon_{i+1}=2,\epsilon_{i+2}=\ldots=\epsilon_{i+k-1}=0.
\end{cases}\label{eq: rat 4} 
\end{align}
and the rest of the brackets are zero.

\begin{rem}
{\em
The Poisson bracket \eqref{eq: rat 1}$-$\eqref{eq: rat 4} is a consequence of a study of the Moser map $m_{u,v}:G^{u,v}\to\left(\mathcal{W}_n,\pb_0\right)$ for small values of $n$, where $\pb_0$ is the Atiyah-Hitchin bracket.
}
\end{rem}

\begin{exa}
{\em
The case $n=3$ is shown in the Table \ref{table 1}. 
\begin{table}[h]
\begin{center}
\begin{tabular}{|c|c|c|c|}
\hline
& $u^{-1}=s_1s_2, v=s_1s_2$ & $u^{-1}=s_2s_1, v=s_1s_2$ & $u^{-1}=s_2s_1, v=s_2s_1$\\
\hline
$\lbrace c_1,c_2\rbrace_{nl}$ & 0 & $\frac{c_1c_2(c_1+1)}{d_2}$ & $\frac{2c_1c_2(c_1+1)}{d_2}$\\
\hline
$\lbrace c_1,d_1\rbrace_{nl}$ & $c_1$ & $c_1$ & $c_1$\\
\hline
$\lbrace c_1,d_2\rbrace_{nl}$ & $-c_1(c_1+1)$ & $-c_1(c_1+1)$ & $-c_1(c_1+1)$\\
\hline
$\lbrace c_1,d_3\rbrace_{nl}$ & $c_1c_2$ & $0$ & $-c_1c_2(c_1+1)$\\
\hline
$\lbrace c_2,d_1\rbrace_{nl}$ & 0 & $\frac{c_1c_2d_1}{d_2}$ & $\frac{2c_1c_2d_1}{d_2}$\\
\hline
$\lbrace c_2,d_2\rbrace_{nl}$ & $c_2$ & $c_2$ & $c_2$\\
\hline
$\lbrace c_2,d_3\rbrace_{nl}$ & $-c_2(c_2+1)$ & $-c_2(c_2+1)$ & $-c_2(c_2+1)$\\
\hline
$\lbrace d_1,d_2\rbrace_{nl}$ & $c_1d_1$ & $c_1d_1$ & $c_1d_1$\\
\hline
$\lbrace d_1,d_3\rbrace_{nl}$ & 0 & 0 & $c_1c_2d_1$\\
\hline
$\lbrace d_2,d_3\rbrace_{nl}$ & $c_2d_2$ & $c_2d_2$ & $c_2d_2$\\
\hline
$\epsilon_1$ & 2 & 2 & 2 \\
\hline
$\epsilon_2$ & 2 & 1 & 0 \\
\hline
$\epsilon_3$ & 0 & 0 & 0 \\
\hline
\end{tabular}
\end{center}
\caption{Non-local bracket for $n=3$}\label{table 1}
\end{table}
}
\end{exa}

\begin{thm}\label{thm: 1} 
The bracket given by relations \eqref{eq: rat 1}$-$\eqref{eq: rat 4} is a Poisson bracket.
\end{thm}

\begin{proof}
Letting
\begin{align}
\varepsilon_{i+1}&=\ldots=\varepsilon_{i+m-1}=0,\tag{A1} \\
\varepsilon_{i+1}&=2,\varepsilon_{i+2}=0,\ldots=\varepsilon_{i+m-1}=0,\tag{A2}\\
\varepsilon_{i+m+1}&=\cdots=\varepsilon_{i+m+n-1}=0,\tag{B1}\\
\varepsilon_{i+m+1}&=2,\varepsilon_{i+m+2}=\cdots=\varepsilon_{i+m+n-1}=0,\tag{B2}.
\end{align} 
The Jacobi property for the bracket \eqref{eq: rat 1}--\eqref{eq: rat 4} follows from verifying it for 8 possible triples of functions  $d_i, d_{i+m},d_{i+m+n}$; $d_i,d_{i+m},c_{i+m+n}$; $d_i,c_{i+m},d_{i+m+n}$, etc.\\
Each case is verified separately depending on whether conditions (A1), (A2), (B1) and (B2) are satisfied. We will only provide a justification for the equation
\begin{equation}\label{eq: case1} 
\lbrace d_i,\lbrace d_{i+m},d_{i+m+n}\rbrace_{nl}\rbrace_{nl}+\lbrace d_{i+m},\lbrace d_{i+m+n},d_i\rbrace_{nl}\rbrace_{nl}+\lbrace d_{i+m+n},\lbrace d_i,d_{i+m}\rbrace_{nl}\rbrace_{nl}=0.
\end{equation}
The remaining seven cases can be treated similarly (see \cite{Eber}).

To prove equation \eqref{eq: case1} we will use only (A1) and (B1) considering the following three cases

{\bf Case 1:} $m>1, n>1$, 
\begin{enumerate}
\item[(a)] If condition (A1) is false we have two possibilities

\begin{enumerate}
\item[(i)] Condition (B1) is false then \eqref{eq: case1} is zero.
\item[(ii)] Condition (B1) is true then \eqref{eq: case1} is equal to
\[
\lbrace d_i,d_{i+m}\prod_{j=i+m}^{i+m+n-1}c_j\rbrace_{nl}=\lbrace d_i,d_{i+m}\rbrace_{nl}\prod_{j=i+m}^{i+m+n-1}c_j+d_{i+m}\lbrace d_i,\prod_{j=i+m}^{i+m+n-1}c_j\rbrace_{nl}=0.
\]
\end{enumerate}

\item[(b)] If condition (A1) is true 

\begin{enumerate}
\item[(i)] If condition (B1) is false then \eqref{eq: case1} becomes
\[
\lbrace d_{i+m+n},d_i\prod_{j=i}^{i+m-1}c_j\rbrace_{nl}=d_i\lbrace d_{i+m+n},\prod_{j=i}^{i+m-1}c_j\rbrace_{nl}=0.
\]
The last equality follows from the bracket $\lbrace d_{i+m+n},c_l\rbrace_{nl}=0,\,l=i,\ldots,i+m-1$, since (B1) is false.
\item[(ii)] If condition (B1) is true and $\varepsilon_{i+m}\neq 0$ then \eqref{eq: case1} becomes
\begin{equation}\label{eq: case1-a} 
\lbrace d_i,d_{i+m}\rbrace_{nl}\prod_{j=i+m}^{i+m+n-1}c_j
+d_{i+m}\lbrace d_i,\prod_{j=i+m}^{i+m+n-1}c_j\rbrace_{nl} +d_i\lbrace d_{i+m+n},\prod_{j=i}^{i+m-1}c_j\rbrace_{nl}
\end{equation}
Now, since $\varepsilon_{i+m}\neq 0$, one has 
\begin{align*}
\lbrace d_i,c_l\rbrace_{nl}&=0,\quad\text{for } l=i+m+1,\ldots,i+m+n-1\\
\lbrace d_{i+m+n},c_l\rbrace_{nl}&=0,\quad\text{for } l=i,\ldots,i+m-2.
\end{align*}
Therefore, \eqref{eq: case1-a} is equal to
\[
d_i\prod_{j=i}^{i+m+n-1}c_j+(\varepsilon_{i+m}-2)d_i\prod_{j=i}^{i+m+n-1}c_j+\lbrace d_{i+m+n},c_{i+m-1}\rbrace_{nl}d_i\prod_{j=i}^{i+m-2}c_j 
\]
which becomes equal to zero for $\varepsilon_{i+m}=1,2$ since
\[
\lbrace d_{i+m+n},c_{i+m-1}\rbrace_{nl}=
\begin{cases}
-\prod_{j=i+m-1}^{i+m+n-1}c_j & \text{ if } \varepsilon_{i+m}=2\\
0&\text{ if } \varepsilon_{i+m}=1.
\end{cases}
\]
\item[(iii)] If condition (B1) is true and $\varepsilon_{i+m}=0$ then \eqref{eq: case1} is equal to
\begin{equation}
\begin{aligned}\label{eq: case1-b} 
d_i\prod_{j=i}^{i+m+n-1}c_j+d_{i+m}\lbrace d_i,\prod_{j=i+m}^{i+m+n-1}c_j\rbrace -d_i\prod_{j=i+m}^{i+m+n-1}c_j\lbrace d_{i+m},\prod_{j=i}^{i+m-1}c_j\rbrace\\
-d_i\prod_{j=i}^{i+m-1}c_j\lbrace d_{i+m},\prod_{j=i+m}^{i+m+n-1}c_j\rbrace+d_i\lbrace d_{i+m+n},\prod_{j=i}^{i+m-1}c_j\rbrace
\end{aligned}
\end{equation}
Using the following identities
\begin{align*}
\lbrace d_i,\prod_{j=i+m}^{i+m+n-1}c_j\rbrace_{nl} &= -2d_i\prod_{j=i+m}^{i+m+n-1}c_j\sum_{r=i+m}^{i+m+n-1}\frac{1}{d_r}\prod_{j=i}^{r-1}c_j,\\
\prod_{j=i+m}^{i+m+n-1}c_j\lbrace d_{i+m},\prod_{j=i}^{i+m-1}c_j\rbrace_{nl}&= \prod_{j=i}^{i+m+n-1}c_j\left(\sum_{r=i}^{i+m-2}(c_r+1)\prod_{j=r+1}^{i+m-1}c_j+c_{i+m-1}+1\right),\\
\prod_{j=i}^{i+m-1}c_j\lbrace d_{i+m},\prod_{j=i+m}^{i+m+n-1}c_j\rbrace_{nl}&= -\prod_{j=i}^{i+m+n-1}c_j\left(2d_{i+m}\sum_{r=i+m+2}^{i+m+n-1}\frac{1}{d_r}\prod_{j=i+m}^{r-1}c_j+2c_{i+m}\frac{d_{i+m}}{d_{i+m+1}}+1\right),\\
\lbrace d_{i+m+n},\prod_{j=i}^{i+m-1}c_j\rbrace_{nl}&=\prod_{j=i}^{i+m-1}c_j\sum_{r=i}^{i+m-1}(c_r+1)\prod_{j=r+1}^{i+m+n-1}c_j,
\end{align*}
one can show that \eqref{eq: case1-b} is equal to zero.
\end{enumerate}
\end{enumerate}

{\bf Case 2:} $m=1,\,n>1$. In this case we let 
\begin{align*}
\varepsilon_{i+2}&=\cdots=\varepsilon_{i+n}=0.\tag{B3}
\end{align*} 
\begin{enumerate}
\item[(a)] If condition (B3) is false then $\lbrace d_j,d_{i+1+n}\rbrace_{nl}=0,\,j=i,i+1$ and $\lbrace c_i,d_{i+1+n}\rbrace_{nl}=0$ . Therefore, \eqref{eq: case1} is equal to zero.
\item[(b)] If condition (B3) is true then \eqref{eq: case1} is equal to
\begin{multline}\label{eq: case1-2} 
c_id_i\prod_{j=i+1}^{i+n}c_j+d_{i+1}\sum_{k=i+1}^{i+n}\prod_{r=i+1\atop r\neq k}^{i+n}c_r\lbrace d_i,c_k\rbrace_{nl}+\lbrace d_{i+1},\lbrace d_{i+1+n}, d_i\rbrace_{nl}\rbrace_{nl}\\
+c_i\lbrace d_{i+1+n},d_i\rbrace_{nl}+d_i\lbrace d_{i+1+n},c_i\rbrace_{nl}
\end{multline}
\begin{enumerate}
\item[(i)] If $\varepsilon_{i+1}\neq 0$ then $\lbrace d_i,d_{i+1+n}\rbrace_{nl}=0$ and $\lbrace d_i,c_k\rbrace_{nl}=0,\,k=i+2,\ldots,i+n$. The equation \eqref{eq: case1-2} becomes
\[
d_i\prod_{j=i}^{i+n}c_j+d_{i+1}\lbrace d_i,c_{i+1}\rbrace_{nl}\prod_{r=i+2}^{i+n}c_r+d_i\lbrace d_{i+1+n},c_i\rbrace_{nl},
\]
which is zero since
\[
\lbrace d_i,c_{i+1}\rbrace_{nl}=
\begin{cases}
0&\text{ if } \varepsilon_{i+1}=2\\
-\frac{c_ic_{i+1}d_i}{d_{i+1}} &\text{ if } \varepsilon_{i+1}=1
\end{cases},\qquad
\lbrace d_{i+1+n},c_i\rbrace_{nl}=
\begin{cases}
-\prod_{j=i}^{i+n}c_j&\text{ if } \varepsilon_{i+1}=2\\
0 &\text{ if } \varepsilon_{i+1}=1
\end{cases}.
\]
\item[(ii)] If $\varepsilon_{i+1}= 0$ then \eqref{eq: case1-2}, using \eqref{eq: rat 1} and \eqref{eq: rat 2}, becomes
\begin{equation}\label{eq: case1-2a} 
2d_i\prod_{j=i}^{i+n}c_j+d_{i+1}\sum_{k=i+1}^{i+n}\prod_{r=i+1\atop r\neq k}^{i+n}c_r\lbrace d_i,c_k\rbrace_{nl}+c_id_i\prod_{j=i}^{i+n}c_j-d_i\sum_{k=i}^{i+n}\prod_{r=i\atop r\neq k}^{i+n}c_r\lbrace d_{i+1},c_k\rbrace_{nl}.
\end{equation}
Using the fact that $\varepsilon_{i+1}=0$, we have
\begin{align*}
\sum_{k=i+1}^{i+n}\prod_{r=i+1\atop r\neq k}^{i+n}c_r\lbrace d_i,c_k\rbrace_{nl}&=-\frac{d_i}{d_{i+1}}\prod_{r=i}^{i+n}c_r+\sum_{k=i+2}^{i+n}\prod_{r=i+1\atop r\neq k}^{i+n}c_r\lbrace d_i,c_k\rbrace_{nl},\\
\sum_{k=i}^{i+n}\prod_{r=i\atop r\neq k}^{i+n}c_r\lbrace d_{i+1},c_k\rbrace_{nl}&=-(c_i+1)\prod_{j=i}^{i+n}c_j+\prod_{j=i}^{i+n}c_j-\sum_{k=i+2}^{i+n}\prod_{r=i\atop i+n}^{in+n}c_r\lbrace d_{i+1},c_k\rbrace_{nl},
\end{align*}
which lead to \eqref{eq: case1-2a} be equal to zero.
\end{enumerate}
\end{enumerate}

{\bf Case 3:} $m=n=1$. In this case the equation \eqref{eq: case1} reduces to
\begin{equation}\label{eq: case1-3} 
d_{i+1}\lbrace d_i,c_{i+1}\rbrace_{nl}+c_ic_{i+1}d_i+\lbrace d_{i+1},\lbrace d_{i+2},d_i\rbrace_{nl}\rbrace_{nl}
+c_i\lbrace d_{i+2},d_i\rbrace_{nl}+d_i\lbrace d_{i+2},c_i\rbrace_{nl}
\end{equation}
\begin{enumerate}
\item[(a)] If $\varepsilon_{i+1}\neq 0$ then $\lbrace d_i,d_{i+2}\rbrace_{nl}=0$ and the equation \eqref{eq: case1-3} becomes
\[
(\varepsilon_{i+1}-1)c_ic_{i+1}d_i+d_i\begin{cases}-c_ic_{i+1}&\text{ if } \varepsilon_{i+1}=2\\
0 &\text{ if } \varepsilon_{i+1}=1\end{cases}
\]
which is zero for $\varepsilon_{i+1}=1,2$.
\item[(b)] If $\varepsilon_{i+1}= 0$ then, using \eqref{eq: rat 1}--\eqref{eq: rat 2}, the equation \eqref{eq: case1-3} is reduced to
\[
-\lbrace d_{i+1},c_ic_{i+1}d_{i+1}\rbrace_{nl}=c_ic_{i+1}\lbrace d_i,d_{i+1}\rbrace_{nl}+c_{i+1}d_i\lbrace c_i,d_{i+1}\rbrace_{nl}+c_i,d_i\lbrace c_{i+1},d_{i+1}\rbrace_{nl}=0
\]
\end{enumerate}
\end{proof}

\begin{exa}\label{example 2} 
{\em
Let $u=v=s_{n-1}\cdots s_1$. Then $\varepsilon=(2,1,\ldots,1,0)$ and the Poisson bracket \eqref{eq: rat 1}--\eqref{eq: rat 4} becomes 
\begin{equation}\label{eq: relativistic toda} 
\begin{aligned}
\lbrace d_i,d_{i+1}\rbrace_{nl}&=d_ic_i,&\lbrace c_i,c_{i+1}\rbrace_{nl}&=\frac{(c_i+1)c_ic_{i+1}}{d_{i+1}},\\
\lbrace c_{i+1},d_i\rbrace_{nl} &=\frac{d_ic_ic_{i+1}}{d_{i+1}},& \lbrace c_i,d_i\rbrace_{nl}&=c_i,\\
\lbrace c_i,d_{i+1}\rbrace_{nl}&=-c_i(c_i+1).
\end{aligned}
\end{equation}
Using the change of variables $\tilde{c}_i=c_id_i$, the Poisson bracket \eqref{eq: relativistic toda} becomes
\[
\lbrace \tilde{c}_i,d_i\rbrace_{nl}=\tilde{c}_i=\lbrace d_i,d_{i+1}\rbrace_{nl}=-\lbrace\tilde{c}_i,d_{i+1}\rbrace_{nl},
\]
which is a linear Poisson bracket for the relativistic Toda lattice \eqref{eq: relativistic toda lattice} (see \cite{Su}).
}
\end{exa}

Consider now $u^{-1}=v=s_{n-1}\ldots s_1$. Then $\varepsilon=(2,0,\ldots,0)$ and, due to Proposition \ref{prop: factorization}, generic elements $L\in G^{u,v}/H$ are described as the set of Jacobi matrices 
\begin{equation}\label{eq: Coxeter tridiagonal} 
L=\begin{pmatrix}
d_1 & 1  &  0 & \ldots & 0\\
c_1d_1^2 & d_2+c_1d_1 & 1 & \ldots &0\\
\vdots & \ddots & \ddots &\ddots & \vdots \\
0 & \cdots & * & * & 1\\
0 & \ldots & 0 & c_{n-1}d_{n-1}^2 & d_n+c_{n-1}d_{n-1}
\end{pmatrix}.
\end{equation}
Moreover, the non-local Poisson bracket \eqref{eq: rat 1}--\eqref{eq: rat 4} becomes
\begin{equation}\label{eq: tridiagonal case} 
\begin{aligned}
\lbrace d_i,d_j\rbrace_{nl}&=d_i\prod_{l=i}^{j-1}c_l,&\lbrace c_i,c_j\rbrace_{nl}&=\frac{2(c_i+1)\prod_{l=i}^{j}c_l}{d_j}, &\lbrace c_i,d_i\rbrace_{nl} =c_i\\
\lbrace c_j,d_i\rbrace_{nl}&=\frac{2d_i\prod_{l=i}^{j}c_l}{d_j},& \lbrace c_i,d_j\rbrace_{nl}&=-(c_i+1)\prod_{l=i}^{j-1}c_l.\\
\end{aligned}
\end{equation}
By means of the change of variables $a_i=c_id_i^2,\, b_i=d_i+c_{i-1}d_{i-1}$, it is not hard to show that the Poisson bracket \eqref{eq: tridiagonal case}  becomes 
\begin{align}\label{eq: linear Toda bracket} 
\lbrace a_i,b_i\rbrace_{nl}=-\lbrace a_i,b_{i+1}\rbrace_{nl}=a_i.
\end{align}
The latter linear Poisson bracket and $F_2(a,b)=\sum_{i=1}^{n-1}a_i+\frac{1}{2}\sum_{i=1}^nb_i^2$ generate the non-periodic Toda lattice \eqref{eq: Toda lattice}. Using the new variables $a_i,\,b_i$, for a generic element $L\in G^{u,v}/H$ as in \eqref{eq: Coxeter tridiagonal}, we obtain the following Poisson relations
\begin{equation}\label{eq: equations flaschka} 
\nl{b_i,\Delta_{\left[i+1,n\right]}(\lambda)}=0,\qquad \nl{a_i,\Delta_{\left[i+1,n\right]}(\lambda)}=a_i\Delta_{\left[i+2,n\right]}(\lambda),\quad i=1,\ldots,n-1.
\end{equation}
In fact, since $\Delta_{\left[i+k,n\right]}(\lambda)$ involve only $a_j,\,b_j,\,j=i+k,\ldots,n$, and 
\[
\Delta_{\left[i+1,n\right]}(\lambda)=(\lambda-b_{i+1})\Delta_{\left[i+2,n\right]}(\lambda)-a_{i+1}\Delta_{\left[i+3,n\right]}(\lambda)
\]
we have
\begin{align*}
\nl{b_i,\Delta_{\left[i+1,n\right]}(\lambda)}&=\nl{b_i,(\lambda-b_{i+1})\Delta_{\left[i+2,n\right]}(\lambda)-a_{i+1}\Delta_{\left[i+3,n\right]}(\lambda)}\\
&=(\lambda-b_{i+1})\nl{b_i,\Delta_{\left[i+2,n\right]}(\lambda)}-\Delta_{\left[i+2,n\right]}(\lambda)\nl{b_i,b_{i+1}}\\
&\mathrel{\phantom{=}}-\Delta_{\left[i+3,n\right]}(\lambda)\nl{b_i,a_{i+1}}-a_{i+1}\nl{b_i,\Delta_{[i+3,n]}(\lambda)}\\
&=0,
\end{align*}
\begin{align*}
\nl{a_i,\Delta_{\left[i+1,n\right]}(\lambda)}&=\nl{a_i,(\lambda-b_{i+1})\Delta_{\left[i+2,n\right]}(\lambda)-a_{i+1}\Delta_{\left[i+3,n\right]}(\lambda)}\\
&=(\lambda-b_{i+1})\nl{a_i,\Delta_{\left[i+2,n\right]}(\lambda)}-\Delta_{\left[i+2,n\right]}(\lambda)\nl{a_i,b_{i+1}}\\
&\mathrel{\phantom{=}}-\Delta_{\left[i+3,n\right]}(\lambda)\nl{a_i,a_{i+1}}-a_{i+1}\nl{a_i,\Delta_{[i+3,n]}(\lambda)}\\
&=a_i\Delta_{[i+2,n]}(\lambda).
\end{align*}

Consider now the $2\times 2$ matrix
\[
L_i=\begin{pmatrix} b_i & 1\\ a_i & b_{i+1}\end{pmatrix}
\]
and its Weyl map
\[
m(\lambda)=m(\lambda;L_i)=\frac{\Delta_{[2,2]}(\lambda)}{\Delta_{[1,2]}(\lambda)}=\frac{\lambda-b_{i+1}}{\lambda^2-(b_i+b_{i+1})\lambda+b_ib_{i+1}-a_i}
\]
From the linear Poisson bracket \eqref{eq: linear Toda bracket}  follows that
\begin{align*}
-\nl{\Delta_{[1,2]}(\lambda),\Delta_{[2,2]}(\mu)}&=-\nl{\lambda^2-(b_i+b_{i+1})\lambda+b_ib_{i+1}-a_i,\mu-b_{i+1}}=-\nl{a_i,b_{i+1}}=a_i\\
&\stackrel{\eqref{eq: AH n=2}}{=}\frac{\Delta_{[1,2]}(\lambda)\Delta_{[1,1]}(\mu)-\Delta_{[1,2]}(\mu)\Delta_{[1,1]}(\lambda)}{\lambda-\mu}-\Delta_{[2,2]}(\lambda)\Delta_{[2,2]}(\mu),\\
\nl{\Delta_{[1,2]}(\lambda),\Delta_{[1,2]}(\mu)}&=\nl{\lambda^2-(b_i+b_{i+1})\lambda+b_ib_{i+1}-a_i,\mu^2-(b_i+b_{i+1})\mu+b_ib_{i+1}-a_i}\\
&=\nl{b_i+b_{i+1},a_i}\lambda+\nl{a_i,b_i+b_{i+1}}\mu=0,\\
\nl{\Delta_{[2,2]}(\lambda),\Delta_{[2,2]}(\mu)}&=0.
\end{align*}
Thus, in particular, the Moser map $m(\lambda):\left(G^{u,v}/H,-\pb_{nl}\right)\to\left(\mathcal{W}_2,\pb_0\right)$ is a Poisson map. 

Now, we consider the $3\times 3$ matrix
\[
L_i=\begin{pmatrix} b_i & 1 & 0\\ a_i & b_{i+1} & 1\\ 0 & a_{i+1} & b_{i+2}\end{pmatrix}
\]
and its Weyl function 
\[
m(\lambda)=m(\lambda;L_i)=\frac{\Delta_{[2,3]}(\lambda)}{\Delta_{[1,3]}(\lambda)}.
\]
Because of the discussion above for the $2\times 2$ matrix, the fact that $\Delta_{[1,3]}(\lambda)=(\lambda-b_i)\Delta_{[2,3]}(\lambda)-a_i\Delta_{[3,3]}(\lambda)$ and equation \eqref{eq: equations flaschka}, we obtain
\begin{align*}
-\nl{\Delta_{[1,3]}(\lambda),\Delta_{[2,3]}(\mu)}&=-\left[(\lambda-b_i)\nl{\Delta_{[2,3]}(\lambda),\Delta_{[2,3]}(\mu)}-\Delta_{[2,3]}(\lambda)\nl{b_i,\Delta_{[2,3]}(\mu)}\right.\\
&\mathrel{\phantom{=}}\left.-a_i\nl{\Delta_{[3,3]}(\lambda),\Delta_{[2,3]}(\mu)}-\Delta_{[3,3]}(\lambda)\nl{a_i,\Delta_{[2,3]}(\mu)}\right]\\
&=-a_i\nl{\Delta_{[2,3]}(\mu),\Delta_{[3,3]}(\lambda)}+a_i\Delta_{[3,3]}(\lambda)\Delta_{[3,3]}(\mu)\\
&=a_i\frac{\Delta_{[2,3]}(\mu)\Delta_{[3,3]}(\lambda)-\Delta_{[2,3]}(\lambda)\Delta_{[3,3]}(\mu)}{\mu-\lambda}.
\end{align*}
On the other hand, using the Atiyah--Hitchin bracket \eqref{eq: reduced F-G},
\begin{align*}
\left\{\Delta_{[1,3]}(\lambda),\Delta_{[2,3]}(\mu)\right\}_0&=\frac{\Delta_{[1,3]}(\lambda)\Delta_{[2,3]}(\mu)-\Delta_{[1,3]}(\mu)\Delta_{[2,3]}(\lambda)}{\lambda-\mu}-\Delta_{[2,3]}(\lambda)\Delta_{[2,3]}(\mu)\\
&=\frac{\left[(\lambda-b_i)\Delta_{[2,3]}(\lambda)-a_i\Delta_{[3,3]}(\lambda)\right]\Delta_{[2,3]}(\mu)-\left[(\mu-b_i)\Delta_{[2,3]}(\mu)-a_i\Delta_{[3,3]}(\mu)\right]\Delta_{[2,3]}(\lambda)}{\lambda-\mu}\\
&\mathrel{\phantom{=}}-\Delta_{[2,3]}(\lambda)\Delta_{[2,3]}(\mu)\\
&=a_i\frac{\Delta_{[2,3]}(\lambda)\Delta_{[3,3]}(\mu)-\Delta_{[2,3]}(\mu)\Delta_{[3,3]}(\lambda)}{\lambda-\mu}.
\end{align*}
Consequently,
\[
-\nl{\Delta_{[1,3]}(\lambda),\Delta_{[2,3]}(\mu)}=\left\{\Delta_{[1,3]}(\lambda),\Delta_{[2,3]}(\mu)\right\}_0.
\]
In a similar way, one can verify
\[
\nl{\Delta_{[1,3]}(\lambda),\Delta_{[1,3]}(\mu)}=\nl{\Delta_{[2,3]}(\lambda),\Delta_{[2,3]}(\mu)}=0.
\]
Therefore, in particular, if $u^{-1}=v=s_2s_1$ then the Moser map $m(\lambda):\left(G^{u,v}/H,-\pb_{nl}\right)\to\left(\mathcal{W}_3,\pb_0\right)$ is a Poisson map. Proceeding by induction on $n$, we obtain the next result. 
\begin{prop}\label{prop: 1} 
If $u^{-1}=v=s_{n-1}\cdots s_1\in S_n$ then the Moser map
\[
m=m_{u,v}\colon\left(G^{u,v}/H,-\pb_{nl}\right)\to\left(\mathcal{W}_n,\pb_0\right)
\]
is a Poisson map.
\end{prop}

To obtain the above result for any pair $(u,v)$ of Coxeter elements of $S_n$, we will need the next Theorem.

\begin{thm}\label{prop: non local poisson maps} 
The map $\sigma_{u,v}^{u',v'}:G^{u,v}/H\to G^{u',v'}/H$ is Poisson with respect to the Poisson structures \eqref{eq: rat 1}--\eqref{eq: rat 4} on $G^{u,v}/H$ and $G^{u',v'}/H$.
\end{thm}

\begin{cor}\label{cor: 1} 
For any pair of Coxeter elements $u,v$, the Moser map $m_{u,v}:\left(G^{u,v}/H,-\pb_{nl}\right)\to\left(\mathcal{W}_n,\pb_0\right)$ is a Poisson map.
\end{cor}

\begin{proof}
This is a consequence from Proposition \ref{prop: 1} and Theorem \ref{prop: non local poisson maps} and the fact that the generalized B\"acklund--Darboux map $\sigma_{u,v}^{u',v'}$ preserves the Weyl function.
\end{proof}

\begin{cor}\label{cor: 2} 
The non-local Poisson structure \eqref{eq: rat 1}--\eqref{eq: rat 4} and the quadratic Poisson bracket \eqref{eq: quadratic epsilon} are compatible.
\end{cor}
\begin{proof}
This is a consequence of  Propositions \ref{prop: compatible} and \ref{prop: quadratic epsilon}, Corollary \ref{cor: 2}, and the fact that the standard Poisson--Lie bracket $\pb_{\GL_n}$ induces $\pb_1$ on Weyl functions.
\end{proof}

\subsection{Discussion and Problems}

\begin{enumerate}
\item {\em Other compatible Poisson brackets}. A natural question is to compute explicit Poisson brackets on $c_i,d_i$ coordinates corresponding to the others members of the family of compatible Faybusovich--Gekhtman Poisson brackets for $k=2,\ldots,n-1$. 
\item{\em Poisson relations in factorization parameters}. In \cite[Equation 4.1]{KoZ}, a general formulae of Poisson relations, using factorization parameters $t_k,\,a^{\gamma}$, is induced from the standard Poisson--Lie structure on a simple Lie group $G$. The Poisson bracket \eqref{eq: rat 1}--\eqref{eq: rat 4} can suggest the existence of another Poisson bracket in factorization parameters $t_k,\,a^{\gamma}$ which will be compatible with the quadratic one presented in \cite{KoZ}.
\end{enumerate}

\appendix

\section{Proof of Theorem \ref{prop: non local poisson maps}}\label{Appendix A}

\begin{proof}
Because of the definition of the map $\sigma_{u,v}^{u',v'}$ it is enough to verify that the maps which are listed in the Table \ref{table 3} are Poisson maps. To this we need to verify fomulae \eqref{eq: rat 1}--\eqref{eq: rat 4} with variables $c'_j,d'_j$ and the $n$-tuple $\varepsilon'$. We will only provide justification for the first case of Table \ref{table 3}, i.e.  
$\varepsilon_i=2,\varepsilon_{i+1}=0,\,\varepsilon'_i=1,\varepsilon'_{i+1}=1$. Similar computations can be done for the remaining cases of Table \ref{table 3}. We have to consider the following four brackets.

{\bf Case 1:} $\lbrace d'_j,d'_{j+k}\rbrace_{nl}$. We have two cases to consider.
\begin{enumerate}
\item[(1)] Case: $k=1$. We have the following cases depending on the value of $j$.
\begin{enumerate}
\item[(a)] If $j=i+1$ then 
\begin{align*}
\lbrace d'_j,d'_{j+1}\rbrace_{nl}&=\left\{\frac{d_{i+1}}{1+c_i},d_{i+2}\right\}_{nl}=\frac{\lbrace d_{i+1},d_{i+2}\rbrace_{nl}}{1+c_i}-\frac{d_{i+1}}{(1+c_i)^2}\lbrace c_i,d_{i+2}\rbrace_{nl}
\stackrel{\varepsilon_{i+1}=0}{=}c_{i+1}d_{i+1}=c'_jd'_j.
\end{align*}
\item[(b)] If $j=i$ then 
\begin{align*}
\lbrace d'_j,d'_{j+1}\rbrace_{nl}&=\left\{d_i(1+c_i),\frac{d_{i+1}}{1+c_i}\right\}_{nl}=\frac{(1+c_i)\lbrace d_i,d_{i+1}\rbrace_{nl}+d_i\lbrace c_i,d_{i+1}\rbrace_{nl}}{1+c_i}-\frac{d_{i+1}}{1+c_i}\lbrace d_i,c_i\rbrace_{nl}\\
&=\frac{c_id_{i+1}}{1+c_i}=c'_jd'_j.
\end{align*}
\item[(c)] If $j=i-1$ then 
\begin{align*}
\lbrace d'_j,d'_{j+1}\rbrace_{nl}&=\lbrace d_{i-1},d_i(1+c_i)\rbrace_{nl}=d_i\lbrace d_{i-1},c_i\rbrace_{nl}+(1+c_i)\lbrace d_{i-1},d_i\rbrace_{nl}\stackrel{\varepsilon_i=2}{=}(1+c_i)c_{i-1}d_{i-1}=c'_jd'_j.
\end{align*}

\item[(d)] If $j<i-1$ or $j>i+1$ then we can see that the bracket $\lbrace d'_j,d'_{j+1}\rbrace_{nl}=\lbrace d_j,d_{j+1}\rbrace_{nl}=c_jd_j=c'_jd'_j$. 
\end{enumerate}
Therefore for any $j$ we have
\[
\lbrace d'_j,d'_{j+1}\rbrace_{nl}=c'_jd'_j.
\]

\item[(2)] Case: $k>1$. We have the following cases depending on the value of $j$.
\begin{enumerate}
\item[(a)] If $j>i+1$ then the formula \eqref{eq: rat 1} for the bracket $\lbrace d'_j,d'_{j+k}\rbrace_{nl}$ is verified straightforward.
\item[(b)] If $j=i+1$ then
\begin{align*}
\lbrace d'_j,d'_{j+k}\rbrace_{nl}&=\left\{\frac{d_{i+1}}{1+c_i},d_{i+1+k}\right\}_{nl}=\frac{\lbrace d_{i+1},d_{i+1+k}\rbrace_{nl}}{1+c_i}-\frac{d_{i+1}}{(1+c_i)^2}\lbrace c_i,d_{i+1+k}\rbrace_{nl}\\
&\stackrel{\varepsilon_{i+1}=0}{=}
\begin{cases}
d_{i+1}\prod_{l=i+1}^{i+k}c_l  &\text{ if } \varepsilon_{i+2}=\ldots=\varepsilon_{i+k}=0\\
0 & \text{ otherwise }
\end{cases}\\
&=
\begin{cases}
d'_j\prod_{l=j}^{j+k-1}c'_j  &\text{ if } \varepsilon'_{j+1}=\ldots=\varepsilon'_{j+k-1}=0\\
0 & \text{ otherwise }
\end{cases}
\end{align*}
\item[(c)] If $j=i$ then
\begin{align*}
\lbrace d'_j,d'_{j+k}\rbrace_{nl}&=\lbrace d_i(1+c_i),d_{i+k}\rbrace_{nl}=(1+c_i)\lbrace d_i,d_{i+k}\rbrace_{nl}+d_i\lbrace c_i,d_{i+k}\rbrace_{nl}\stackrel{\varepsilon_{i+1}=0}{=}0
\end{align*}
which is consistent with \eqref{eq: rat 1} because $\varepsilon'_{i+1}=1$. 
\item[(d)] If $j<i$ then we have four cases to consider.
\begin{enumerate}
\item[(i)] If $j+k>i+1$ then $\lbrace d'_j,d'_{j+k}\rbrace_{nl}=\lbrace d_j,d_{j+k}\rbrace_{nl}\stackrel{\varepsilon_i=2}{=}0$ which is consistent with equation \eqref{eq: rat 1} because $\varepsilon'_i=1$.
\item[(ii)] If $j+k=i+1$ then $\lbrace d'_j,d'_{j+k}\rbrace_{nl}=\lbrace d_j,\frac{d_{i+1}}{1+c_i}\rbrace_{nl}$. Here we have two consider two cases $j=i-1$ and $j<i-1$. Since, $\varepsilon_i=2$, both of the cases are equal to zero and the result is consistent with \eqref{eq: rat 1} because $\varepsilon'_i=1$.
\item[(iii)] If $j+k=i$ then 
\begin{align*}
\lbrace d'_j,d'_i\rbrace_{nl}
&=\lbrace d_j,d_i(1+c_i)\rbrace_{nl}\stackrel{\varepsilon_i=2}{=}(1+c_i)
\begin{cases}
d_j\prod_{l=j}^{i-1}c_l & \text{ if } \varepsilon_{j+1}=\ldots=\varepsilon_{i-1}=0\\
0 & \text{ otherwise}
\end{cases}\\
&=\begin{cases}
d'_j\prod_{l=j}^{i-1}c'_l & \text{ if } \varepsilon'_{j+1}=\ldots=\varepsilon'_{i-1}=0\\
0 & \text{ otherwise}
\end{cases}
\end{align*} 
\item[(iv)] If $j+k<i-1$ then $\lbrace d'_j,d'_{j+k}\rbrace_{nl}$ satisfies the second expression of \eqref{eq: rat 1}.
\end{enumerate}
\end{enumerate}
\end{enumerate}
Therefore the bracket $\lbrace d'_j,d'_{j+k}\rbrace_{nl}$ verifies equation \eqref{eq: rat 1} with respect to variables $c'_j,d'_j$ and the $n$-tuple $\varepsilon'$.

{\bf Case 2:} $\lbrace c'_j,d'_{j+k}\rbrace_{nl}$. We have three cases to consider.
\begin{enumerate}
\item[(1)] Case $k=0$. We have the following cases depending on the value of $j$.
\begin{enumerate}
\item[(a)] If $j=i-1$ then 
\[
\lbrace c'_j,d'_j\rbrace_{nl}=\lbrace c_{i-1}(1+c_i),d_{i-1}\rbrace_{nl}
\stackrel{\varepsilon_i=2}{=}(1+c_i)\lbrace c_{i-1},d_{i-1}\rbrace_{nl}=c_{i-1}(1+c_i)=c'_j.
\]
\item[(b)] If $j=i$ then
\begin{align*}
\lbrace c'_j,d'_j\rbrace_{nl}&=\left\{\frac{c_id_{i+1}}{d_i(1+c_i)^2},d_i(1+c_i)\right\}_{nl}=\frac{c_i(1+c_i)\lbrace d_{i+1},d_i\rbrace_{nl}+c_id_i\lbrace d_{i+1},c_i\rbrace_{nl}+d_{i+1}\lbrace c_i,d_i\rbrace_{nl}}{d_i(1+c_i)^2}\\
&=\frac{c_id_i}{d_i(1+c_i)^2}=c'_j.
\end{align*}
\item[(c)] If $j=i+1$ then 
\begin{align*}
\lbrace c'_j,d'_j\rbrace_{nl}&=\left\{c_{i+1}(1+c_i),\frac{d_{i+1}}{1+c_i}\right\}_{nl}=\lbrace c_{i+1},d_{i+1}\rbrace_{nl}-\frac{d_{i+1}}{1+c_i}\lbrace c_{i+1},c_i\rbrace_{nl}+\frac{c_{i+1}}{1+c_i}\lbrace c_i,d_{i+1}\rbrace_{nl}\\
&\stackrel{\varepsilon_{i+1}=0}{=}c_{i+1}(1+c_i)=c'_j.
\end{align*}
\item[(d)] If $j<i-1$ or $j>i+1$ then $\lbrace c'_j,d'_j\rbrace_{nl}=\lbrace c_j,d_j\rbrace_{nl}=c_j=c'_j$.
\end{enumerate}
Therefore for any $j$ we have
\[
\lbrace c'_j,d'_j\rbrace=c'_j
\]
\item[(2)] Case $k=1$. We have the following cases depending on value of $j$.
\begin{enumerate}
\item[(a)] If $j=i-1$ then 
\begin{align*}
\lbrace c'_j,d'_{j+1}\rbrace_{nl}&=\lbrace c_{i-1}(1+c_i),d_i(1+c_i)\rbrace_{nl}\stackrel{\varepsilon_i=2}{=}(1+c_i)^2\lbrace c_{i-1},d_i\rbrace_{nl}+c_{i-1}(1+c_i)\lbrace c_i,d_i\rbrace_{nl}\\
&=-c_{i-1}(1+c_i)(1+c_{i-1}(1+c_i))=-c'_j(c'_j+1).
\end{align*}
\item[(b)] If $j=i$ then 
\begin{align*}
\lbrace c'_j,d'_{j+1}\rbrace_{nl}&=\left\{\frac{c_id_i}{d_i(1+c_i)^2},\frac{d_{i+1}}{1+c_i}\right\}_{nl}=\frac{1}{d_i(1+c_i)^2}\left[\lbrace c_i,d_{i+1}\rbrace_{nl}-\frac{c_i}{d_i(1+c_i)^2}\lbrace d_i(1+c_i),d_{i+1}\rbrace_{nl}\right.\\
&\quad+\left.\frac{c_id_{i+1}}{d_i(1+c_i)^2}\lbrace d_i,c_i\rbrace_{nl}\right]=-\frac{c_id_{i+1}}{d_i(1+c_i)^2}\left(1+\frac{c_id_{i+1}}{d_i(1+c_i)^2}\right)=-c'_j(c'_j+1).
\end{align*}
\item[(c)] If $j=i+1$ then
\begin{align*}
\lbrace c'_j,d'_{j+1}\rbrace_{nl}&=\lbrace c_{i+1}(1+c_i),d_{i+2}\rbrace_{nl}=(1+c_i)\lbrace c_{i+1},d_{i+2}\rbrace_{nl}+c_{i+1}\lbrace c_i,d_{i+2}\rbrace_{nl}\\
&\stackrel{\varepsilon_{i+1}=0}{=}-c_{i+1}(1+c_i)(1+c_{i+1}(1+c_i))=-c'_j(c'_j+1).
\end{align*}
\item[(d)] If $j<i-1$ and $j>i+1$ then we can see that $\lbrace c'_j,d'_{j+1}\rbrace_{nl}$ satisfies the second case of equation \eqref{eq: rat 2}.
\end{enumerate}
Therefore for any $j$
\[
\lbrace c'_j,d'_{j+1}\rbrace_{nl}=-c'_j(c'_j+1).
\]

\item[(3)] Case $k>1$. We have the following cases depending on the value of  $j$.
\begin{enumerate}
\item[(a)] If $j>i+1$ then the last three expressions of \eqref{eq: rat 2} for the bracket $\lbrace c'_j,d'_{j+k}\rbrace_{nl}$ are verified straightforward.
\item[(b)] If $j=i+1$ then
\begin{align*}
\lbrace c'_j,d'_{j+k}\rbrace_{nl}&=\lbrace c_{i+1}(1+c_i),d_{i+k+1}\rbrace_{nl}\\
&\stackrel{\varepsilon_{i+1}=0}{=}
\begin{cases}
-(1+c_i)(c_{i+1}(c_i+1)+1)\prod_{l=i+1}^{i+k}c_l & \text{ if } \varepsilon_{i+2}=\ldots=\varepsilon_{i+k}=0,\\
(1+c_i)\prod_{l=i+1}^{i+k}c_l & \text{ if } \varepsilon_{i+2}=2,\varepsilon_{i+3}=\ldots=\varepsilon_{i+k}=0,\\
0 & \text{ otherwise},
\end{cases}\\
&=
\begin{cases}
-(c'_j+1)\prod_{l=j}^{j+k-1}c'_l & \text{ if } \varepsilon'_{j+1}=\ldots=\varepsilon'_{j+k-1}=0,\\
\prod_{l=j}^{j+k-1}c'_l & \text{ if } \varepsilon'_{j+1}=2,\varepsilon'_{j+2}=\ldots=\varepsilon'_{j+k-1}=0,\\
0 & \text{ otherwise}.
\end{cases}
\end{align*}
\item[(c)] If $j=i$ then
\begin{align*}
\lbrace c'_j,d'_{j+k}\rbrace_{nl}&=\left\{\frac{c_id_{i+1}}{d_i(1+c_i)^2},d_{i+k}\right\}_{nl}=\frac{\frac{d_{i+1}(1-c_i)}{1+c_i}\lbrace c_i,d_{i+k}\rbrace_{nl}+c_i\lbrace d_{i+1},d_{i+k}\rbrace_{nl}-\frac{c_id_{i+1}}{d_i}\lbrace d_i,d_{i+k}\rbrace_{nl}
}{d_i(1+c_i)^2}\\
&\stackrel{\varepsilon_{i+1}=0}{=}-\frac{d_{i+1}}{d_i(1+c_i)^2}(1-c_i-1+c_i)
\begin{cases}
\prod_{l=i}^{i+k-1}c_l &\text{ if } \varepsilon_{i+2}=\ldots=\varepsilon_{i+k-1}=0,\\
0 & \text{ otherwise },
\end{cases}\\
&=0.
\end{align*}
which is consistent with \eqref{eq: rat 2} because $\varepsilon'_{i+1}=1$.\\
\item[(d)] If $j=i-1$, we have that $\lbrace c'_j,d'_{j+k}\rbrace_{nl}=0$ which is compatible with equation \eqref{eq: rat 2} since $\varepsilon'_i=1$. In fact,
\begin{enumerate}
\item[(i)] If $k=2$ then
\[
\lbrace c'_j,d'_{j+k}\rbrace_{nl}=\left\{ c_{i-1}(1+c_i),\frac{d_{i+1}}{1+c_i}\right\}_{nl}\stackrel{\varepsilon_i=2}{=}0.
\]
\item[(ii)] If $k>2$ then
\begin{align*}
\lbrace c'_j,d'_{j+k}\rbrace_{nl}&=\lbrace c_{i-1}(1+c_i),d_{i-1+k}\rbrace_{nl}=c_{i-1}\lbrace c_i,d_{i-1+k}\rbrace_{nl}+(1+c_i)\lbrace c_{i-1},d_{i-1+k}\rbrace_{nl}\equals_{\varepsilon_{i+1}=0}^{\varepsilon_i=2}0.
\end{align*}
\end{enumerate}
\item[(e)] If $j<i-1$ then we have three cases to consider.
\begin{enumerate}
\item[(i)] If $j+k=i$ then 
\begin{align*}
\lbrace c'_j,d'_{j+k}\rbrace_{nl}&=\lbrace c_j,d_i(1+c_i)\rbrace_{nl}\stackrel{\varepsilon_i=2}{=}(1+c_i)
\begin{cases}
-(c_j+1)\prod_{l=j}^{i-1}c_l &\text{ if } \varepsilon_{j+1}=\ldots=\varepsilon_{i-1}=0,\\
\prod_{l=j}^{i-1}c_l &\text{ if } \varepsilon_{j+1}=2,\varepsilon_{j+2}\ldots=\varepsilon_{i-1}=0,\\
0 &\text{ otherwise},
\end{cases}\\
&=
\begin{cases}
-(c'_j+1)\prod_{l=j}^{j+k-1}c'_l &\text{ if } \varepsilon'_{j+1}=\ldots=\varepsilon'_{j+k-1}=0,\\
\prod_{l=j}^{j+k-1}c'_l &\text{ if } \varepsilon'_{j+1}=2,\varepsilon'_{j+2}=\ldots=\varepsilon'_{j+k-1}=0,\\
0 &\text{ otherwise}.
\end{cases}
\end{align*}
\item[(ii)] If $j+k=i+1$ then
\[
\lbrace c'_j,d'_{j+k}\rbrace_{nl}=\left\{c_j,\frac{d_{i+1}}{1+c_i}\right\}_{nl}=\frac{\lbrace c_j,d_{i+1}\rbrace_{nl}}{1+c_i}-\frac{d_{i+1}}{(1+c_i)^2}\lbrace c_j,c_i\rbrace_{nl}\stackrel{\varepsilon_i=2}{=}0
\]
which is consistent with equation \eqref{eq: rat 2} since $\varepsilon'_i=1$.
\item[(iii)] If $j+k<i-1$ or $j+k>i+1$  then we can see immediately that $\lbrace c'_j,'d'_{j+k}\rbrace_{nl}$ satisfies one of the last three cases of \eqref{eq: rat 2} depending on the $n$-tuple $\varepsilon'$.
\end{enumerate}
\end{enumerate}
\end{enumerate}
Summarizing, the bracket $\lbrace c'_j,d'_{j+k}\rbrace_{nl}$ satisfies the equation \eqref{eq: rat 2} with respect to variables $c'_j,d'_j$ and the $n$-tuple $\varepsilon'$.

{\bf Case 3:} $\lbrace c'_{j+k},d'_j\rbrace_{nl}$. We have two cases to consider.
\begin{enumerate}
\item[(1)] Case $k=1$. We have the following cases depending on the value of $j$.
\begin{enumerate}
\item[(a)] If $j=i-2$ then
\begin{align*}
\lbrace c'_{j+1},d'_j\rbrace_{nl}&=\lbrace c_{i-1}(1+c_i),d_{i-2}\rbrace_{nl}\stackrel{\varepsilon_i=2}{=}(1+c_i)\lbrace c_{i-1},d_{i-2}\rbrace_{nl}=(1+c_i)(2-\varepsilon_{i-1})c_{i-2}c_{i-1}\frac{d_{i-2}}{d_{i-1}}\\
&=(2-\varepsilon'_{j+1})c'_jc'_{j+1}\frac{d'_j}{d'_{j+1}}.
\end{align*} 
\item[(b)] If $j=i-1$ then
\begin{align*}
\lbrace c'_{j+1},d'_j\rbrace_{nl}&=\left\{\frac{c_id_{i+1}}{d_i(1+c_i)^2},d_{i-1}\right\}_{nl}\stackrel{\varepsilon_i=2}{=}-\frac{c_id_{i+1}}{d_i^2(1+c_i)^2}\lbrace d_i,d_{i-1}\rbrace_{nl}=\frac{c_id_{i+1}c_{i-1}d_{i-1}}{d_i^2(1+c_i)^2}=c'_jc'_{j+1}\frac{d'_j}{d'_{j+1}}.
\end{align*} 
which is consistent with the first case of equation \eqref{eq: rat 3} since $\varepsilon'_i=1$.
\item[(c)] If $j=i$ then
\begin{align*}
\lbrace c'_{j+1},d'_j\rbrace_{nl}&=\lbrace c_{i+1}(1+c_i),d_i(1+c_i)\rbrace_{nl}=(1+c_i)\left[(1+c_i)\lbrace c_{i+1},d_i\rbrace_{nl}+d_i\lbrace c_{i+1},c_i\rbrace_{nl}+c_{i+1}\lbrace c_i,d_i\rbrace_{nl}\right]\\
&\stackrel{\varepsilon_{i+1}=0}{=}(1+c_i)c_ic_{i+1}=c'_jc'_{j+1}\frac{d'_j}{d'_{j+1}}.
\end{align*}
which is consistent with the first case of equation \eqref{eq: rat 3} since $\varepsilon'_{i+1}=1$.
\item[(d)] If $j=i+1$ then
\begin{align*}
\lbrace c'_{j+1},d'_j\rbrace_{nl}&=\left\{c_{i+2},\frac{d_{i+1}}{1+c_i}\right\}_{nl}=\frac{\lbrace c_{i+2},d_{i+1}\rbrace_{nl}}{1+c_i}-\frac{d_{i+1}}{(1+c_i)^2}\lbrace c_{i+2},c_i\rbrace_{nl}\\
&\stackrel{\varepsilon_{i+1}=0}{=}(2-\varepsilon_{i+2})c_{i+1}c_{i+2}\frac{d_{i+1}}{d_{i+2}}=(2-\varepsilon'_{j+1})c'_jc'_{j+1}\frac{d'_j}{d'_{j+1}}.
\end{align*}
\item[(e)] If $j>i+1$ or $j<i-2$ then we can see that the bracket $\lbrace c'_{j+1},d'_j\rbrace_{nl}$ satisfies the first case of equation \eqref{eq: rat 3}. 
\end{enumerate}
Therefore for any $j$ we have
\[
\lbrace c'_{j+1},d'_j\rbrace_{nl}=(2-\varepsilon'_{j+1})c'_jc'_{j+1}\frac{d'_j}{d'_{j+1}}.
\]
\item[(2)] Let $k>1$. We have the following cases depending in the values of $j$.
\begin{enumerate}
\item[(a)] If $j>i+1$ then the last two expressions of \eqref{eq: rat 3} for the bracket $\lbrace c'_{j+k},d'_j\rbrace_{nl}$ are verified straightforward.
\item[(b)] If $j=i+1$ then
\begin{align*}
\lbrace c'_{j+k},d'_j\rbrace_{nl}&=\left\{ c_{i+1+k},\frac{d_{i+1}}{1+c_i}\right\}_{nl}=\frac{\lbrace c_{i+1+k},d_{i+1}\rbrace_{nl}}{1+c_i}-\frac{d_{i+1}}{(1+c_i)^2}\lbrace c_{i+1+k},c_i\rbrace_{nl}\\
&\stackrel{\varepsilon_{i+1}=0}{=}
\begin{cases}
(2-\varepsilon_{i+1+k})\frac{d_{i+1}}{d_{i+1+k}}\prod_{l=i+1}^{i+1+k}c_l &\text{ if } \varepsilon_{i+2}=\ldots=\varepsilon_{i+k}=0,\\
0 &\text{ otherwise},
\end{cases}\\
&=
\begin{cases}
(2-\varepsilon'_{j+k})\frac{d'_j}{d'_{j+k}}\prod_{l=j}^{j+k}c'_l &\text{ if } \varepsilon'_{j+1}=\ldots=\varepsilon'_{j+k-1}=0,\\
0 &\text{ otherwise}.
\end{cases}
\end{align*}
\item[(c)] If $j=i$ then
\[
\lbrace c'_{j+k},d'_j\rbrace_{nl}=\lbrace c_{i+k},d_i(1+c_i)\rbrace_{nl}=(1+c_i)\lbrace c_{i+k},d_i\rbrace_{nl}+d_i\lbrace c_{i+k},c_i\rbrace_{nl}\stackrel{\varepsilon_{i+1}=0}{=}0,
\]
which is consistent with \eqref{eq: rat 3} because $\varepsilon'_{i+1}=1$.
\item[(d)] If $j<i$ we have the following subcases
\begin{enumerate}
\item[(i)] If $j+k<i-1$ or $j+k>i+1$ then the equation \eqref{eq: rat 3} is verified easily.
\item[(ii)] If $j+k=i-1$ then
\begin{align*}
\lbrace c'_{j+k},d'_j\rbrace_{nl}&=\lbrace c_{i-1}(1+c_i),d_j\rbrace_{nl}\stackrel{\varepsilon_i=2}{=}(1+c_i)
\begin{cases}
(2-\varepsilon_{i-1})\frac{d_j}{d_{i-1}}\prod_{l=j}^{i-1}c_l & \text{ if } \varepsilon_{j+1}=\ldots=\varepsilon_{i-2}=0,\\
0 & \text{ otherwise},
\end{cases}\\
&=
\begin{cases}
(2-\varepsilon'_{j+k})\frac{d'_j}{d'_{j+k}}\prod_{l=j}^{j+k}c'_l & \text{ if } \varepsilon'_{j+1}=\ldots=\varepsilon'_{j+k-1}=0,\\
0 & \text{ otherwise}.
\end{cases}
\end{align*}
\item[(iii)] If $j+k=i$ then
\begin{align*}
\lbrace c'_{j+k},d'_j\rbrace_{nl} &=\left\{\frac{c_id_{i+1}}{d_i(1+c_i)^2},d_j\right\}_{nl}\stackrel{\varepsilon_i=2}{=}-\frac{c_id_{i+1}}{d_i^2(1+c_i)^2}\lbrace d_i,d_j\rbrace_{nl}\\
&=\frac{c_id_{i+1}}{d_i^2(1+c_i)^2}
\begin{cases}
-d_j\prod_{l=j}^{i-1}c_l &\text{ if } \varepsilon_{j+1}=\ldots=\varepsilon_{i-1}=0,\\
0 & \text{ otherwise},
\end{cases}\\
&=
\begin{cases}
\frac{d'_j}{d'_{j+k}}\prod_{l=j}^{j+k}c'_l & \text{ if } \varepsilon'_{j+1}=\ldots=\varepsilon'_{j+k-1}=0,\\
0 & \text{ otherwise}.
\end{cases}
\end{align*}
\item[(iv)] If $j+k=i+1$ then $\lbrace c'_{j+k},d'_j\rbrace_{nl}=\lbrace c_{i+1}(c_i+1),d_j\rbrace_{nl}\stackrel{\varepsilon_i=2}{=}0$ which is consistent with the last two cases of \eqref{eq: rat 3} because $\varepsilon'_i=1$.
\end{enumerate}
\end{enumerate}
\end{enumerate}
Summarizing, the bracket $\lbrace c'_{j+k},d'_j\rbrace_{nl}$ satisfies the equation \eqref{eq: rat 3} with respect to the variables $c'_j,d'_j$ and the $n$-tuple $\varepsilon'$.

{\bf Case 4:} $\lbrace c'_j,c'_{j+k}\rbrace_{nl}$. We have two cases to consider.
\begin{enumerate}
\item[(1)] Case $k=1$. We have the following cases depending on the values of $j$. 
\begin{enumerate}
\item[(a)] If $j=i-2$ then
\begin{align*}
\lbrace c'_j,c'_{j+1}\rbrace_{nl}&=\lbrace c_{i-2},c_{i-1}(1+c_i)\rbrace_{nl}\stackrel{\varepsilon_i=2}{=}(1+c_i)(2-\varepsilon_{i-1})\frac{c_{i-2}+1}{d_{i-1}}c_{i-2}c_{i-1}=(2-\varepsilon'_{j+1})\frac{c'_j+1}{d'_{j+1}}c'_jc'_{j+1}.
\end{align*}
\item[(b)] If $j=i-1$ then
\begin{align*}
\lbrace c'_j,c'_{j+1}\rbrace_{nl}&=\left\{c_{i-1}(1+c_i),\frac{c_id_{i+1}}{d_i(1+c_i)^2}\right\}_{nl}\stackrel{\varepsilon_i=2}{=}-\frac{c_id_{i+1}}{d_i^2(1+c_i)^2}\left[(1+c_i)\lbrace c_{i-1},d_i\rbrace_{nl}+c_{i-1}\lbrace c_i,d_i\rbrace_{nl}\right]\\
&=\frac{c_id_{i+1}}{d_i^2(1+c_i)^2}c_{i-1}(1+c_{i-1}(1+c_i))=\frac{c'_j+1}{d'_{j+1}}c'_jc'_{j+1},
\end{align*}
which is consistent with the first case of equation \eqref{eq: rat 4} because $\varepsilon'_i=1$.
\item[(c)] If $j=i$ then
\begin{align*}
\lbrace c'_j,c'_{j+1}\rbrace_{nl}&=\left\{\frac{c_id_{i+1}}{d_i(1+c_i)^2},c_{i+1}(1+c_i),\right\}_{nl}\stackrel{\varepsilon_{i+1}=0}{=}\frac{c_ic_{i+1}}{d_i}\left(1+\frac{c_id_{i+1}}{d_i(1+c_i)^2}\right)=\frac{c'_j+1}{d'_{j+1}}c'_jc'_{j+1},
\end{align*}
which is consistent with the first case of equation \eqref{eq: rat 4} since $\varepsilon'_{i+1}=1$.
\item[(d)] If $j=i+1$ then
\begin{align*}
\lbrace c'_j,c'_{j+1}\rbrace_{nl}&=\lbrace c_{i+1}(1+c_i),c_{i+2}\rbrace_{nl}\stackrel{\varepsilon_{i+1}=0}{=}(2-\varepsilon_{i+2})\frac{(1+c_{i+1}(1+c_i))}{d_{i+2}}c_{i+1}(c_i+1)c_{i+2}\\
&=(2-\varepsilon'_{j+1})\frac{c'_j+1}{d'_{j+1}}c'_jc'_{j+1}.
\end{align*}
\item[(e)] If $j>i+1$ or $j<i-2$ then it is straightforward that the bracket $\lbrace c'_j,c'_{j+1}\rbrace_{nl}$ satisfies the first case of \eqref{eq: rat 3}.
\end{enumerate}
Therefore for any $j$ we have
\[
\lbrace c'_j,c'_{j+1}\rbrace_{nl}=(2-\varepsilon'_{j+1})\frac{c'_j+1}{d'_{j+1}}c'_jc'_{j+1}.
\]
\item[(2)] Case $k>1$. We have the following cases depending on the values of $j$.
\begin{enumerate}
\item[(a)] If $j>i+1$ then the last three expressions of \eqref{eq: rat 4} for the bracket $\lbrace c'_j,c'_{j+k}\rbrace_{nl}$ are verified straightforward. 
\item[(b)] If $j=i+1$ then
\begin{align*}
\lbrace c'_j,c'_{j+k}\rbrace_{nl}&=\lbrace c_{i+1}(1+c_i),c_{i+1+k}\rbrace_{nl}=(1+c_i)\lbrace c_{i+1},c_{i+1+k}\rbrace_{nl}+c_{i+1}\lbrace c_i,c_{i+1+k}\rbrace_{nl}\\
&\stackrel{\varepsilon_{i+1}=0}{=}
\begin{cases}
(2-\varepsilon_{i+1+k})\frac{c_{i+1}(1+c_i)+1}{d_{i+1+k}}(1+c_i)\prod_{l=i+1}^{i+k+1}c_l & \text{ if } \varepsilon_{i+2}=\ldots=\varepsilon_{i+k}=0,\\
(\varepsilon_{i+1+k}-2)\frac{1+c_i}{d_{i+1+k}}\prod_{l=i+1}^{i+k+1}c_l & \text{ if }\varepsilon_{i+2}=2,\varepsilon_{i+3}=\ldots=\varepsilon_{i+k}=0,\\
0 &\text{ otherwise},
\end{cases}\\
&=
\begin{cases}
(2-\varepsilon_{j+k})\frac{c'_j+1}{d'_{j+k}}\prod_{l=j}^{j+k}c'_l & \text{ if } \varepsilon'_{j+1}=\ldots=\varepsilon'_{j+k-1}=0,\\
(\varepsilon_{j+k}-2)\frac{1}{d'_{j+k}}\prod_{l=j}^{j+k}c'_l & \text{ if }\varepsilon'_{j+1}=2,\varepsilon'_{j+2}=\ldots=\varepsilon'_{j+k-1}=0,\\
0 &\text{ otherwise}.
\end{cases}
\end{align*}
\item[(c)] If $j=i$ then
\begin{align*}
\lbrace c'_j,c'_{j+k}\rbrace_{nl}&=\left\{\frac{c_id_{i+1}}{d_i(1+c_i)^2},c_{i+k}\right\}_{nl}\\
&\stackrel{\varepsilon_{i+1}=0}{=}\frac{1}{d_i(1+c_i)^2}
\begin{cases}
(2-\varepsilon_{i+k})\frac{c_id_{i+1}}{d_{i+k}}\prod_{l=i}^{i+k}c_l & \text{ if } \varepsilon_{i+2}=\ldots=\varepsilon_{i+k-1}
=0,\\
0 & \text{ otherwise},
\end{cases}\\
&\mathrel{\phantom{=}}-\frac{c_id_{i+1}}{d_i^2(1+c_i)^2}
\begin{cases}
(2-\varepsilon_{i+k})\frac{d_i(c_i+1)}{d_{i+k}}\prod_{l=i}^{i+k}c_l & \text{ if } \varepsilon_{i+2}=\ldots=\varepsilon_{i+k-1}
=0,\\
0 & \text{ otherwise}.
\end{cases}\\
&=0.
\end{align*}
which is consistent with \eqref{eq: rat 4} beacuse $\varepsilon'_{i+1}=0$.
\item[(d)] If $j=i-1$ then $\lbrace c'_j,c'_{j+k}\rbrace_{nl}=0$ which is consistent with \eqref{eq: rat 4} because 
$\varepsilon'_i=0$. In fact,
\begin{enumerate}
\item[(i)]  If $k=2$ then
\begin{align*}
\lbrace c'_j,c'_{j+k}\rbrace_{nl}&=\lbrace c_{i-1}(1+c_i),c_{i+1}(1+c_i)\rbrace_{nl}\stackrel{}{=}(1+c_i)^2\lbrace c_{i-1},c_{i+1}\rbrace_{nl}+c_{i-1}(1+c_i)\lbrace c_i,c_{i+1}\rbrace_{nl}\\
&+c_{i+1}(1+c_i)\lbrace c_{i-1},c_i\rbrace_{nl}\equals^{\varepsilon_i=2}_{\varepsilon_{i+1}=0}0.
\end{align*}
\item[(ii)] If $k>2$ then
\begin{align*}
\lbrace c'_j,c'_{j+k}\rbrace_{nl}&=\lbrace c_{i-1}(1+c_i),c_{i-1+k}\rbrace_{nl}\\
&\equals^{\varepsilon_i=2}_{\varepsilon_{i+1}=0}c_{i-1} 
\begin{cases}
(2-\varepsilon_{i-1+k})\frac{c_i+1}{d_{i-1+k}}\prod_{l=i}^{i-1+k}c_l &\text{ if }\varepsilon_{i+2}=\ldots=\varepsilon_{i-2+k}
=0,\\
0 & \text{ otherwise},
\end{cases}\\
&\mathrel{\phantom{=}}+(1+c_i)
\begin{cases}
(\varepsilon_{i-1+k}-2)\frac{1}{d_{i-1+k}}\prod_{l=i-1}^{i-1+k}c_l &\text{ if }\varepsilon_{i+2}=\ldots=\varepsilon_{i-2+k}
=0,\\
0 & \text{ otherwise}.
\end{cases}\\
&=0.
\end{align*}
\end{enumerate}
\item[(e)] If $j<i-1$ then we have the following subcases
\begin{enumerate}
\item[(i)] If $j+k<i-1$ or $j+k>i+1$ the bracket $\lbrace c'_j,c'_{j+k}\rbrace_{nl}$ satisfies equation \eqref{eq: rat 4}.
\item[(ii)] If $j+k=i-1$ then
\begin{align*}
\lbrace c'_j,c'_{j+k}\rbrace_{nl}&=\lbrace c_j,c_{i-1}(1+c_i)\rbrace_{nl}\stackrel{\varepsilon_i=2}{=}(1+c_i)\lbrace c_j,c_{i-1}
\rbrace_{nl}\\
&=(1+c_i)
\begin{cases}
(2-\varepsilon_{i-1})\frac{c_j+1}{d_{i-1}}\prod_{l=j}^{i-1}c_l & \text{ if } \varepsilon_{j+1}=\ldots=\varepsilon_{i-2}=0,\\
(\varepsilon_{i-1}-2)\frac{1}{d_{i-1}}\prod_{l=j}^{i-1}c_l & \text{ if } \varepsilon_{j+1}=2,\varepsilon_{j+2}=\ldots=\varepsilon_{i-2}=0,\\
0 & \text{ otherwise}.
\end{cases}\\
&=
\begin{cases}
(2-\varepsilon'_{j+k})\frac{c'_j+1}{d'_{j+k}}\prod_{l=j}^{j+k}c'_l & \text{ if } \varepsilon'_{j+1}=\ldots=\varepsilon'_{j+k-1}=0,\\
(\varepsilon'_{j+k}-2)\frac{1}{d'_{j+k}}\prod_{l=j}^{j+k}c'_l & \text{ if } \varepsilon'_{j+1}=2,\varepsilon'_{j+2}=\ldots=\varepsilon'_{j+k-1}=0,\\
0 & \text{ otherwise}.
\end{cases}
\end{align*}
\item[(iii)] If $j+k=i$ then
\begin{align*}
\lbrace c'_j,c'_{j+k}\rbrace_{nl}&=\left\{c_j,\frac{c_id_{i+1}}{d_i(1+c_i)^2}\right\}_{nl}\stackrel{\varepsilon_i=2}{=}-\frac{c_id_{i+1}}{d_i^2(1+c_i)^2}\lbrace c_j,d_i\rbrace_{nl}\\
&=\frac{c_id_{i+1}}{d^2_i(1+c_i)^2}
\begin{cases}
-(c_j+1)\prod_{l=j}^{i-1}c_l & \text{ if } \varepsilon_{j+1}=\ldots=\varepsilon_{i-1}=0,\\
\prod_{l=j}^{i-1}c_l & \text{ if } \varepsilon_{j+1}=2,\varepsilon_{j+2}=\ldots=\varepsilon_{i-1}=0,\\
0 & \text{ otherwise},
\end{cases}\\
&=
\begin{cases}
\frac{c'_j+1}{d'_{j+k}}\prod_{l=j}^{j+k}c'_l & \text{ if } \varepsilon'_{j+1}=\ldots=\varepsilon'_{j+k-1}=0,\\
-\frac{1}{d'_{j+k}}\prod_{l=j}^{j+k}c'_l & \text{ if } \varepsilon'_{j+1}=2,\varepsilon'_{j+2}=\ldots=\varepsilon'_{j+k-1}=0,\\
0 & \text{ otherwise}.
\end{cases}
\end{align*}
which is compatible with equation \eqref{eq: rat 4} since $\varepsilon'_i=1$.
\item[(iv)] If $j+k=i+1$ then $\lbrace c'_j,c'_{j+k}\rbrace_{nl}=\lbrace c_j,c_{i+1}(1+c_i)\rbrace_{nl}\stackrel{\varepsilon_i=2}{=}0$ which is consistent with \eqref{eq: rat 4} because $\varepsilon'_i=1$.
\end{enumerate}
\end{enumerate}
\end{enumerate}
Summarizing, the bracket $\lbrace c'_j,c'_{j+k}\rbrace_{nl}$ satisfies the equation \eqref{eq: rat 4} with respect to the variables $c'_j,d'_j$ and the $n$-tuple $\varepsilon'$.
\end{proof}

\subsection*{Acknowledgements}
First, I am very thankful to M. Gekhtman for his support, guidance and stimulating conversations during my visit to the University of Notre Dame. Second, I appreciate the suggestions of M. Gekhtman and I. Mencattini during the writing process of this manuscript and the anonymous referees whose suggestions helped improve this pape. Lastly, I thank to FAPESP grant 2014/08512-3 and PNPD CAPES.

\end{document}